\newtheorem{theorem}{Theorem}[section]
\newtheorem{proposition}[theorem]{Proposition}
\newtheorem{lemma}[theorem]{Lemma}
\theoremstyle{definition}
\numberwithin{equation}{section}
\numberwithin{theorem}{section}
\renewcommand{\epsilon}{\varepsilon}
\newcommand{\mc}[1]{{\mathcal #1}}
\newcommand{\bb}[1]{{\mathbb #1}}
\newcommand{\ms}[1]{{\mathscr #1}}
\newcommand{\bs}[1]{{\boldsymbol #1}}
\newcommand{\e}{\epsilon}
\newcommand{\ee}{\mathrm{e}}
\newcommand{\id}{{1 \mskip -5mu {\rm I}}}
\newcommand{\Ent}{\mathop{\rm Ent}\nolimits}
\newcommand{\de}{\mathop{}\!\mathrm{d}}
\title{Large deviations for a binary collision model: energy evaporation}
\author[G.\ Basile]{Giada Basile}
\address{Giada Basile \hfill\break \indent
   Dipartimento di Matematica, Universit\`a di Roma `La Sapienza'
   \hfill\break \indent
   P.le Aldo Moro 2, 00185 Roma, Italy}
 \email{basile@mat.uniroma1.it}
 \author[D.\ Benedetto]{Dario Benedetto}
 \address{Dario Benedetto \hfill\break \indent
   Dipartimento di Matematica, Universit\`a di Roma `La Sapienza'
   \hfill\break \indent
   P.le Aldo Moro 2, 00185 Roma, Italy}
 \email{benedetto@mat.uniroma1.it}
\author[L.\ Bertini]{Lorenzo Bertini}
\address{Lorenzo Bertini \hfill\break \indent
   Dipartimento di Matematica, Universit\`a di Roma `La Sapienza'
   \hfill\break \indent
   P.le Aldo Moro 2, 00185 Roma, Italy}
 \email{bertini@mat.uniroma1.it}
  \author[E.\ Caglioti]{Emanuele Caglioti}
 \address{Emanuele Caglioti \hfill\break \indent
	  Dipartimento di Matematica, Universit\`a di Roma `La Sapienza' 
	\hfill\break \indent
	P.le Aldo Moro 2, 00185 Roma, Italy}
 \email{caglioti@mat.uniroma1.it}
\begin{document}
\begin{abstract}
We analyze the large deviations for   a discrete energy Kac-like walk. In particular, we exhibit a path, with probability exponentially small   in the number of particles, that looses energy.
\end{abstract}

\keywords{Kac model, Large deviations, Violation of energy conservation}
\subjclass[2010]{35Q20 
  82C40 
}

\maketitle
\thispagestyle{empty}

\section{Introduction}
\label{s:0}

Large deviations associated to Boltzmann-type equations have been the
object of recent investigations. The most challenging
case of Newtonian dynamics
of hard spheres in the
Boltzmann-Grad limit has
been heuristically discussed in \cite{Bou} and rigorously,
by means of cluster expansion,  in
\cite{BGSS2}.
The case of microscopic stochastic dynamics has
been originally analyzed in \cite{Le}, where
a large deviation upper bound is derived.
In \cite{Re} a large deviation principle is obtained
for a space inhomogeneous model with a finite set of velocities.
More recently, in \cite{BBBO}
it is considered a homogeneous model which conserves momentum but not
energy. The large deviation upper bound is achieved,
while the lower bound is obtained for a restricted class of paths.
In \cite{He} the analogous results are provided for the Kac walk,
which conserves also the energy.

We emphasize that, except in the case
of bounded velocities, a proof of a large deviation principle with
matching upper and lower bound is still missing,
even in the homogeneous case.
A key issue is the
possible occurrence of macroscopic paths with finite rate function
that violate the conservation of the energy.  A class of examples has
been constructed in \cite{He} by exploiting the solutions of
homogeneous Boltzmann equations provided by Lu and Wennberg in
\cite{LuW}, for which the energy is increasing.
 
Here we consider a Kac-like microscopic dynamics with discrete energy,
that is inspired by the so-called KMP model \cite{KMP,BGL}, and
described as follows: $N$ particles, with equally spaced energy
levels, evolve via random binary collisions such that in each
collision the total energy is preserved. At the kinetic level the
one-particle energy distribution evolves according to a discrete
homogeneous Boltzmann equation, that is an infinite system of coupled
ordinary differential equations.  We focus on the large deviation
properties of the pair empirical measure and flux and propose a
candidate rate function $I$ when the initial distribution of the
energies satisfies the micro-canonical constraint, i.e. the total
energy is fixed. By the arguments in \cite{BBBO, BBBC1, He}, it can be
shown that the large deviation upper bound holds with rate function
$I$, and a matching lower bound can be proven for a restricted class
of path s which conserve the energy. Our main novel point is the
construction of a path $(\bar f, \bar Q)$ which looses the energy and
whose probability is exponentially small with rate
$I(\bar f, \bar Q)$.
This result quantifies the probability to violate the conservation
of the energy, which is exponentially small in the number of particles
$N$.
It can be compared with the result in \cite{QY} where,
in the contest of the derivation of incompressible Navier-Stokes equations
from stochastic lattice gas, it is shown that the probability of violating
the incompressibility condition is of the order $\ee^{-N^2}$.


Referring to \cite{BBBC1} for a general discussion, we emphasize that,
due to the micro-canonical constraint, at the kinetic level the energy
cannot increase and the candidate rate function $I$ is different from
the one in \cite{Le, He}.


\subsection*{The model}
Given $N\ge 2$, a \emph{configuration} is defined by $N$ energies  in
$\bb N$.
The configuration space is therefore given by $\Sigma_N=\bb N^{N}$.
Elements of $\Sigma_N$ are denoted by $\bs \e\coloneqq(\e_1,.., \e_N)$ and  we denote by  $\Sigma_{N,E}$ the configuration space with total
energy  $E\in \bb N$, i.e.
\begin{equation*}
\Sigma_{N,E}\coloneqq \big\{\bs \e\in \bb N^{N}: \sum_{i=1}^N \e_i=E\big\}.
\end{equation*}  
The microscopic dynamics is defining by choosing at random a pair $\{i,j\}$ and redistributing uniformly the corresponding energies.
Therefore we consider  the Markov processes on $\Sigma_{N}$ whose generator acts on bounded functions $f\colon \Sigma_N\to \bb R$ as
\begin{equation*}
  \ms L_N f = \frac 1 N  \sum_{\{i,j\}} L_{ij} f,
\end{equation*}
where the sum is carried over the unordered pairs $\{i,j\}\subset \{1,..,N\}$, $i\neq j$, and 
\begin{equation}\label{Lij}
  L_{ij}f \, (\bs \e) = \frac 1 {\e_i +\e_j+1}
  \sum_{\ell =0}^{\e_i+\e_j}
  \big[ f(T_{ij}^{\ell} \bs \e) - f (\bs \e) \big],
\end{equation}
in which
\begin{equation*}
  \big(T_{ij}^{\ell} \bs \e\big)_k \coloneqq
  \begin{cases}
    \ell  & \textrm{if } k=i \\
    \e_i+\e_j-\ell & \textrm{if } k=j \\
    \e_k & \textrm{otherwise.} 
  \end{cases}
\end{equation*}
Observe that, for each $E>0$, the  dynamics preserves $\Sigma_{N,E}$. Moreover, it is ergodic when restricted to $\Sigma_{N,E}$ and reversible with respect to the uniform measure
on $\Sigma_{N,E}$.

We denote by $(\bs \e(t))_{t\ge 0}$ the continuous time Markov chain generated by
$\ms L_N$.  In particular, the path $\bs \e(\cdot)$ is
piecewise constant, and the transition probability
of its jumps is given in \eqref{Lij}.
Fix hereafter $T>0$. Given a probability $\nu$ on $\Sigma_{N,E}$ we denote
by $\bb P_\nu^N$ the law of this chain on the time interval $[0,T]$,
when the initial datum is sampled according to $\nu$.
Observe that $\bb P_\nu^N$ is a probability on the Skorokhod
space $D([0,T];\Sigma_{N,E})$.
As usual if $\nu=\delta_{\bs \e}$ for some $\bs \e \in \Sigma_{N,E}$, the
corresponding law is simply denoted by $\bb P_{\bs \e}^N$.
We refer to \cite{Li} for a gentle introduction to 
continuous time Markov chains.

\subsection*{Empirical observables}
Given $e \in (0, +\infty)$, we denote by $\ms P_e(\bb N)$ the set of probability measures $\pi$ on $\bb N$ with mean bounded by $e$, i.e.
such that $\sum_{\e} \e \pi(\e)\leq e$.
We consider $\ms P_e(\bb N)$ as a closed subset of the space of probability measure on $\bb N$  equipped with the weak topology.
Then  $\ms P_e(\bb N)$ endowed with the relative  topology is a compact  Polish space. Indeed, $\ms P_e(\bb N)$ is the weak closure of the set of probabilities on $\bb N$ with mean $e$.
The {\it empirical measure} records the energy of the particles
forgetting their labels, for 
$E=\lfloor N\, e\rfloor$ it is defined as  the map
$\pi^N\colon \Sigma_{N,E} \to \ms P_e(\bb N)$ given by 
\begin{equation}
  \label{1}
  \pi^N(\bs \e) \coloneqq\frac 1N \sum_{i=1}^N\delta_{\e_i}.
\end{equation}

Let $D\big([0,T]; \ms P_e(\bb N)\big)$ the set of
$\ms P_e(\bb N)$-valued c{\'a}dl{\'a}g paths endowed with the
Skorokhod topology and the corresponding Borel $\sigma$-algebra.
With a slight abuse of notation we denote also by $\pi^N$ the map 
from $D\big([0,T]; \Sigma_{N,E} \big)$ to $D\big([0,T]; \ms P_e(\bb N)\big)$ defined by $\pi^N_t(\bs \e)\coloneqq\pi^N(\bs \e(t))$,
$t\in [0,T]$. 

We also introduce the {\it empirical flow} that records
the collisions of the particles forgetting their labels.
To this end, 
we denote by $\ms M$ the  subset of the finite measures $Q$  on
$[0,T]\times \bb N^2\times \bb N^2$  that satisfy
$Q(\de t; \e,\e_*, \e',\e_*')=Q(\de t; \e_*,\e, \e',\e_*') =Q(\de t; \e, \e_*, \e'_*,\e')$. We endow  $\ms M$ with the weak*
topology and the associated Borel $\sigma$-algebra.
The empirical flow is the map
$Q^N\colon D\big([0,T]; \Sigma_{N,E} \big) \to \ms M$
defined by
\begin{equation}
  \label{2}
  Q^N(\bs \e) (F) \coloneqq\frac 1N
  \sum_{\{i,j\}} \sum_{k\ge 1} F\big(\tau^{i,j}_k;
  \e_i(\tau^{i,j}_k-),\e_j({\tau^{i,j}_k}-),
  \e_i(\tau^{i,j}_k),\e_j(\tau^{i,j}_k)\big) 
  \quad 
\end{equation}
where $F\colon [0,T]\times \bb N^2\times \bb N^2\to \bb R$
is continuous, bounded, and satisfies $F(t; \e,\e_*, \e',\e_*') =F(t; \e_*,\e, \e',\e_*') = F(t; \e,\e_*, \e'_*,\e')$,
while $(\tau^{i,j}_k)_{k\ge 1}$ are the
jump times of the pair $(\e_i,\e_j)$. Here, $\e_i(t-) = \lim_{s\uparrow t} \e_i(s)$.
In view of the conservation of the energy, the
measure $Q^N(\de t;\cdot)$ is supported on 
$\ms E \coloneqq\{\e+\e_*=\e'+\e_*'\}\subset \bb N^2\times \bb N^2$.

For each $\bs \e \in \Sigma_{N,E}$, with $\bb P^N_{\bs \e}$ probability
one, the pair $(\pi^N,Q^N)$ satisfies the following
balance equation that express the conservation of probability. For each
$\phi\colon [0,T]\times \bb N\to \bb R$ bounded and
continuously differentiable with respect to time 
\begin{equation}
  \label{bal1}
  \begin{split}
  &\pi^N_T(\phi_T)-\pi^N_0(\phi_0)-\int_0^T\!\de t\, 
  \pi^N_t(\partial_t \phi_t)\\
  &\quad 
  +\int_0^T\!\!\sum_{ \e,\e_*, \e',\e_*'} Q^N(\de t; \e,\e_*, \e',\e_*')\big[ \phi_t(\e)+\phi_t(\e_*)
  -\phi_t(\e')-\phi_t(\e_*')\big] =0.
  \end{split}
\end{equation}

\subsection*{Law of large numbers}
Fix $m\in \ms P(\bb N)$ and assume one of the following condition:
$m$ is a point mass or
the support of $m$ does not generate a proper sub-lattice of
$\bb Z$.
Note that in the second case $m$ satisfies the condition
for the local central limit theorem for i.i.d. lattice random
variables, see \cite[\S VII.1]{Pe}. 
For $\gamma\in \bb R$ we set
\begin{equation}
  Z_\gamma=Z_\gamma(m) \coloneqq\sum_{\e} m(\e) \ee^{\gamma \e}.
\end{equation}
We assume that there exists $\gamma^*\in (0, +\infty]$ such that
$Z_\gamma < +\infty$ for $\gamma\in (0, \gamma^*)$ and
$Z_\gamma \uparrow + \infty$ for $\gamma \uparrow \gamma^*$.
For $e\in (0, +\infty)$, we
then define the probability $\mu_{N, e}$ on
$\Sigma_{N,\lfloor N e \rfloor}$ by considering
i.i.d. $m$-distributed energies and conditioning to the total
energy, i.e.
\begin{equation}\label{df:muN}
\mu_{N,e}\coloneqq m^{\otimes N}\big(\cdot \vert \sum_{i=1}^N \e_i = \lfloor N e \rfloor  \big),
\end{equation}  
that will be chosen as the initial distribution of the microscopic dynamics.
In the case of point mass, we require that $\{e\}$ is  exactly
the support of $m$.
Observe that, by the equivalence of the ensembles, as $N\to +\infty$
the one-marginal of
$\mu_{N,e}$ converges to the probability $m_e$
given by
\begin{equation}
  \label{eq:me}
  m_e (\e)\coloneqq \frac  {\ee^{\gamma_e \e} m(\e)}{Z_{\gamma_e}},
  \text{ where }\gamma_e<\gamma^* \text{ is such that }
  \sum_\e \e m_e(\e) = e.
  \end{equation}

Denoting by $B$ the collision kernel in \eqref{Lij}, i.e.
\begin{equation}
  \label{eq:B}
  B(\e,\e_*,\e',\e'_*)
  =\frac 1 {\e +\e_* +1} \id_{\{\e +\e_*=\e' +\e'_*\}} \id_{\big\{\{\e, \e_*\}\neq \{\e', \e'_*\}\big\}},
\end{equation}
the law of the large numbers for the empirical measure is described  by the following discrete homogeneous Boltzmann equation
\begin{equation}\label{BE}
\partial_t f_t(\e)=\sum_{\e_*, \e', \e'_*}B(\e,\e_*,\e',\e'_*)\big[f_t(\e') f_t(\e'_*)- f_t(\e)f_t(\e_*) \big].
\end{equation}
More precisely, in probability with respect to $\bb P^N_{\mu_{N,e}}$,
the empirical path $\big(\pi^N_t\big)_{t\in [0,T]}$ converges to
$\big (f_t\big)_{t\in [0,T]}$ where $f_t$ is the unique solution of
the Cauchy problem associated to \eqref{BE} with initial datum
$f_0=m_e$.
As the proof of this statement can be achieved by adapting the chaos
propagation arguments in \cite{Sn}, we omit the details.
The law of the large
numbers of the empirical flow $Q^N$ reads
\begin{equation}
  Q^N(\de t; \e,\e_*, \e',\e_*')\longrightarrow \frac 12
  \de t\,f_t(\e)f_t(\e_*)B(\e,\e_*,\e',\e'_*),
\end{equation}
where the convergence is   in probability with respect to $\bb P^N_{\mu_{N,e}}$. We refer to Lemma \ref{lemma1} below for the proof.

In the general contest of homogeneous Boltzmann equations, uniqueness
of the Cauchy problem associated to \eqref{BE} holds for paths $f_t$
that conserves the energy, see e.g. \cite{MW}.  However in the present
case, since the
$\sup_{\e, \e_*}\sum_{\e', \e'_*} B(\e, \e_*, \e', \e'_*)$ is bounded,
Gronwall's inequality implies the uniqueness without assuming the
energy conservation, see e.g. Lemma 4.1 in \cite{BBBO}.  In
particular, by  uniqueness,
for this model Lu and Wennberg like solutions do not
exist. We finally observe that \eqref{BE} admits a one-parameter family
of stationary solutions given by $f_{\textrm{stat}}(\e)= p(1-p)^{\e}$,
$p\in (0,1]$.

\subsection*{The candidate rate function}
For $e\in (0, +\infty)$, let $\ms S_e$ be the (closed) subset of
$D\big([0,T]; \ms P_e(\bb N)\big)\times \ms M$ given by elements
$(\pi,Q)$ that satisfies the balance equation
\begin{equation}
  \label{bal}
  \begin{split}
  &\pi_T(\phi_T)-\pi_0(\phi_0)-\int_0^T\!\de t\, 
  \pi_t(\partial_t \phi_t)\\
  &\qquad 
  +\int_0 ^T \sum_{\e, \e_*, \e', \e'_*} Q(\de t;\e, \e_*, \e', \e'_*)\big[ \phi_t(\e)+\phi_t(\e_*)
  -\phi_t(\e')-\phi_t(\e_*')\big] =0
  \end{split}
\end{equation}
for each $\phi:[0,T]\times \bb N\to \bb R$  bounded and continuously differentiable in $t$.
We consider $\ms S_e$ endowed with the relative topology and the
corresponding Borel $\sigma$-algebra.

For $\pi\in D\big([0,T]; \ms P_e(\bb N)\big)$ let $Q^\pi$ be
the measure defined by
\begin{equation}
  \label{4}
  Q^\pi(\de t; \e, \e_*, \e', \e'_*) \coloneqq \frac 12
  \de t \, \pi_t(\e) \pi_t(\e_*)   B(\e,\e_*,\e',\e'_*),
\end{equation}
where $B$ is the collision kernel in  \eqref{eq:B}.
Observe that $Q^\pi(\de t,\cdot)$ is supported on $\ms E$.
  Let $\ms S_{e}^{\mathrm{ac}}$ be the subset of $\ms S_e$ given by the elements
  $(\pi,Q)$ such that 
 $\pi\in C\big([0,T];\ms P_e(\bb N)\big)$ and $Q\ll Q^\pi$.
The dynamical rate function $J\colon \ms S_e \to [0,+\infty]$ is defined by
\begin{equation}
  \label{5}
  J(\pi,Q)\coloneqq
  \begin{cases}
    {\displaystyle 
    \int_0^T \sum_{\e, \e_*, \e', \e'_*}  \de Q^\pi \Big[ 
    \, \frac{\de Q\phantom{^\pi}}{\de Q^\pi} \log \frac{\de Q\phantom{^\pi}}{\de Q^\pi} -
    \frac{\de Q\phantom{^\pi}}{\de Q^\pi}  + 1\Big]  } & \textrm{if } (\pi,Q)\in \ms
    S_e ^{\mathrm{ac}}\\ \\
    + \infty& \textrm{otherwise } 
  \end{cases}
\end{equation}

Given two probabilities $\mu_1, \mu_2$, the relative entropy
$\Ent(\mu_2\vert\mu_1)$ is defined as
$\Ent(\mu_2\vert\mu_1)=\int d\mu_1 \rho\log\rho$, where
$d\mu_2=\rho\, d\mu_1$, understanding that $\Ent(\mu_2|\mu_1)=+\infty$
if $\mu_2$ is not absolutely continuous with respect to $\mu_1$. Let
$H_e\colon \ms P_e (\bb N) \to [0, +\infty]$ be defined by
\begin{equation}\label{def:H}
  H_e (\pi)= \Ent(\pi\vert m_e) +(\gamma^*-\gamma_e) \Big [e-\sum_{\e}
  \e \,\pi(\e) \Big], 
\end{equation} 
where $m_e$ and $\gamma_e$ are as in \eqref{eq:me}.  When $m$ is the
point mass on $e$, $H_e(\pi)$ is zero when $\pi = \delta_e$ and
$+\infty$ otherwise.  The candidate large deviation rate function is
given by
\begin{equation}\label{ihj}
  I (\pi,Q) \coloneqq H_e(\pi_0) + J(\pi,Q).  
\end{equation}
As discussed in \cite{BBBC1}, the (static) large deviations of the
empirical measure with respect to the probability $\mu_{N,e}$ are
described by the rate function $H_e$, where the second term on the
r.h.s. of \eqref{def:H} is the cost of having energy less than $e$.
Note that if $\gamma^*=+\infty$, then $H_e(\pi)$ is finite only if
the energy of $\pi$ is $e$.
A key ingredient in the proof is the local central limit
theorem for the sum of independent $m_e$ distributed random variables.

Denote by $\hat {\ms S}_e$ the subset of $\ms S$ given by the pair
$(\pi, Q)$ such that
$$\int_0^T \sum_{\e,\e_*,\e',\e'_*}
Q(\de t, \e,\e_*,\e',\e'_*) (\e + \e_*) < +\infty.$$
If $(\pi, Q)\in \hat {\ms S}_e$, the balance equation \eqref{bal}
implies that the path $\pi_t$ conserves the energy.

As already mentioned, a proof of a large deviations principle for the
pair $(\pi^N,Q^N)$ with matching upper and lower has not been yet
achieved. The analysis in \cite{BBBO,He} implies however the large
deviation upper bound with rate $I$ with a matching lower bound
on the set $\hat {\ms S}_e$. The precise statement is the following.

\begin{theorem}
  \label{upperbound}
  Fix $e \in (0,+\infty)$ and let $\mu_{N,e}$ be the family of
  probabilities on $\Sigma_{N, \lfloor Ne \rfloor}$ defined in
  \eqref{df:muN}.
  The family $\{\bb P^N_{\mu_{N,e}} \circ (\pi^N,Q^N)^{-1}\}$ satisfies a
  large deviations upper bound with good rate function
  $I\colon \ms S_e \to [0,+\infty]$, namely $I$ has compact level sets and
  for each closed $\mc{C} \subset \ms S$
  \begin{equation}\label{upeq}
    \varlimsup_{N\to +\infty} \frac 1N \log   \bb P^N_{\mu_{N,e}}
    \Big( (\pi^N,Q^N)\in \mc{C}
    \Big) \le - \inf_{(\pi,Q)\in \mc C} \, I(\pi,Q).
  \end{equation}
  Moreover, for  each open  $\mc{O} \subset \ms S_e$
  \begin{equation}\label{lbeq}
    \varliminf_{N\to +\infty} \frac 1N \log   \bb P^N_{\mu_{N,e}}
    \Big( (\pi^N,Q^N)\in \mc O
    \Big) \ge - \inf_{(\pi,Q)\in \mc O\cap \hat {\ms S}_e } \, I(\pi,Q).
  \end{equation}
\end{theorem}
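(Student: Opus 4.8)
The plan is to adapt the large deviations scheme of \cite{BBBO,He} to the present discrete-energy model; I only describe the main steps. The starting point is the decomposition $I=H_e(\pi_0)+J(\pi,Q)$: the first term will arise from the static large deviations of the initial law $\mu_{N,e}$, the second from the cost of the jump dynamics, and the two contributions decouple at the level of the exponential martingales used below. A preliminary step common to both bounds is \emph{exponential tightness} of $\{\bb P^N_{\mu_{N,e}}\circ(\pi^N,Q^N)^{-1}\}$ in $\ms S_e$: tightness of $\pi^N$ follows from the uniform bound on $\sup_{\e,\e_*}\sum_{\e',\e'_*}B$ together with an Aldous-type criterion applied after testing $\pi^N_\cdot$ against a fixed countable dense family of functions, and tightness of $Q^N$ follows by bounding exponentially both the total number of collisions and $\int_0^T\!\de t\sum_{\e,\e_*,\e',\e'_*}Q^N(\de t;\e,\e_*,\e',\e'_*)(\e+\e_*)$, the conservation of the energy being used to dominate the energies entering a collision. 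Once exponential tightness is in hand it suffices to prove \eqref{upeq} on compact sets.

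For the upper bound I would use the exponential martingales associated with $\ms L_N$. For $g\colon[0,T]\times\ms E\to\R$ bounded and symmetric, there is a mean-one $\bb P^N_{\bs\e}$-martingale $M^g_t$ whose exponent equals, up to an error that is $o(N)$ uniformly in $t$,
\begin{equation*}
  N\Big(Q^N\big(g\,\id_{[0,t]}\big)-\!\!\int_{[0,t]\times\ms E}\!\!\big(\ee^{g}-1\big)\,\de Q^{\pi^N}\Big),
\end{equation*}
and, likewise, testing $\pi^N$ against a time-dependent $\phi$ through the balance identity \eqref{bal1} produces martingales that in the limit enforce the constraint \eqref{bal}. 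Chebyshev's inequality applied to these martingales, a standard minimax/continuity argument exchanging the supremum over $(\phi,g)$ with the infimum over the event, and the static large deviations $\tfrac1N\log\mu_{N,e}\big(\pi^N\in A\big)\to-\inf_A H_e$ from \cite{BBBC1} (which rest on the local central limit theorem recalled after \eqref{def:H}) then give \eqref{upeq} with rate $H_e(\pi_0)+\sup_g\big\{Q(g)-\int(\ee^g-1)\,\de Q^\pi\big\}$ on $\ms S_e$ and $+\infty$ off it. By the Legendre duality between $u\mapsto\int(\ee^u-1)\,\de Q^\pi$ and the entropy functional, this supremum equals $J(\pi,Q)$ as in \eqref{5}, with the convention that it is $+\infty$ unless $Q\ll Q^\pi$; the same representation makes $I$ lower semicontinuous, and together with exponential tightness this forces the level sets of $I$ to be compact, i.e.\ $I$ is good.

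For the lower bound, fix $(\pi,Q)\in\hat{\ms S}_e$; we may assume $I(\pi,Q)<+\infty$, so that $(\pi,Q)\in\ms S^{\mathrm{ac}}_e$, $H_e(\pi_0)<+\infty$, and, by definition of $\hat{\ms S}_e$, the path $\pi_t$ conserves the energy. Writing $q=\de Q/\de Q^\pi$, the plan is a change of measure: let $\tilde{\bb P}^N$ be the law of the time-inhomogeneous chain on $\Sigma_{N,\lfloor Ne\rfloor}$ in which the pair $\{i,j\}$ performs the transition $T^{\ell}_{ij}$ at rate $\tfrac1N\,\tfrac1{\e_i+\e_j+1}\,q_t\big(\e_i,\e_j,\ell,\e_i+\e_j-\ell\big)$, started from $\mu_{N,e}$ reweighted so as to concentrate on $\pi^N_0\approx\pi_0$. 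Since each such transition conserves the pair energy, $\tilde{\bb P}^N$ is again supported on $\Sigma_{N,\lfloor Ne\rfloor}$. Under $\tilde{\bb P}^N$, a propagation-of-chaos argument as in the law of large numbers (see \cite{Sn} and Lemma \ref{lemma1}) shows that $(\pi^N,Q^N)$ converges in probability to the solution of the $q$-tilted Boltzmann equation together with its companion flux; by \eqref{bal}, $Q\ll Q^\pi$, and the Gronwall uniqueness quoted after \eqref{BE}, this limit is exactly $(\pi,Q)$, so $\tilde{\bb P}^N$ concentrates on every neighborhood $\mc O\ni(\pi,Q)$. Expanding the relative entropy of two jump processes,
\begin{equation*}
  \Ent\big(\tilde{\bb P}^N\,\big\vert\,\bb P^N_{\mu_{N,e}}\big)
  =\Ent\big(\tilde\mu^N\,\big\vert\,\mu_{N,e}\big)
  +\bb E_{\tilde{\bb P}^N}\!\int_0^T\!\!\de t\sum\big(\tilde\lambda_t\log\tfrac{\tilde\lambda_t}{\lambda_t}-\tilde\lambda_t+\lambda_t\big),
\end{equation*}
and using the law of large numbers under $\tilde{\bb P}^N$, one obtains $\tfrac1N\Ent(\tilde{\bb P}^N\vert\bb P^N_{\mu_{N,e}})\to H_e(\pi_0)+J(\pi,Q)=I(\pi,Q)$, the static term being identified once more through the local central limit theorem. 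The entropy inequality
\begin{equation*}
  \bb P^N_{\mu_{N,e}}(\mc O)\ \ge\ \tilde{\bb P}^N(\mc O)\,
  \exp\!\Big(-\frac{\Ent(\tilde{\bb P}^N\vert\bb P^N_{\mu_{N,e}})+\ee^{-1}}{\tilde{\bb P}^N(\mc O)}\Big)
\end{equation*}
then yields \eqref{lbeq}.

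The delicate point, and the step I expect to be the main obstacle, is the lower bound for a \emph{general} $(\pi,Q)\in\hat{\ms S}_e$ with $I(\pi,Q)<+\infty$: the density $q$ need be neither bounded nor bounded away from zero, and the energies are unbounded, so the tilted rates above may fail to be summable and the entropy identity may diverge. As in \cite{BBBO,He} the remedy is an approximation argument — first prove \eqref{lbeq} for paths $(\pi,Q)$ supported on finitely many energy levels with $q$ bounded and bounded below and $\pi_0$ in the interior, for which the above construction is legitimate, and then show that every $(\pi,Q)\in\hat{\ms S}_e$ with $I<+\infty$ is approximated, in $D\big([0,T];\ms P_e(\bb N)\big)\times\ms M$ and with $I$ converging, by such nice paths; here the finite-first-moment constraint defining $\hat{\ms S}_e$, equivalently the conservation of the energy, is what keeps the approximants inside $\ms P_e(\bb N)$ and lets one pass to the limit in $J$. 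Reconciling this truncation with the local-central-limit-theorem control of the initial cost $H_e(\pi_0)$ is the part that requires the most care.
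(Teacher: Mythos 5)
The paper does not in fact prove Theorem~\ref{upperbound}: it is stated as a corollary of the analyses in \cite{BBBO,He}, with the static term $H_e$ coming from \cite{BBBC1}, and the authors explicitly say that the self-contained argument they give concerns Theorem~\ref{main} only and ``do not rely on Theorem~\ref{upperbound}.'' There is therefore no in-paper proof to compare against; your sketch is a faithful reconstruction of the methodology of the cited references (exponential tightness, exponential martingales and Legendre duality for the upper bound, change of measure plus the entropy inequality and a truncation argument for the lower bound on $\hat{\ms S}_e$), and the Legendre identity you invoke, $u\log u-u+1=\sup_g\{gu-(\ee^g-1)\}$, correctly recovers the form of $J$ in \eqref{5}. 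Two small cautions worth recording. First, the claim that the exponential martingale's exponent equals $N\big(Q^N(g\id_{[0,t]})-\int(\ee^g-1)\,\de Q^{\pi^N}\big)$ up to an error $o(N)$ uniformly in $t$ requires the $1/N$-order diagonal corrections to be controlled uniformly, which is where the boundedness of $\sup_{\e,\e_*}\sum_{\e',\e'_*}B$ and the exponential tightness estimates are used; this is not automatic for unbounded $g$, so the minimax step has to be carried out first for bounded $g$ and then extended. Second, in the static step the identification $\frac1N\log\mu_{N,e}(\pi^N\approx\pi_0)\to -H_e(\pi_0)$ rests on the local central limit theorem for $m_e$-distributed variables, which is precisely where the standing assumption that $m$ is a point mass or has support not generating a proper sublattice of $\bb Z$ enters; this is also what makes your final ``reconcile the truncation with the LCLT'' concern the genuinely delicate point of the lower bound, as you correctly flag.
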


Referring to \cite{BBBO,He} for comments on the technicalities
involved in the lower bound, we now turn to the novel point of the
present analysis, that is the construction of paths $(\pi,Q)$ -- with
$\pi$ not energy conserving -- whose probability is precisely of the
order $\exp\{-N I(\pi,Q)\}$. Since these paths do not belong to
$\hat{\ms S}_e$, this result provides insights on the large deviations
properties of Kac's walk not covered by \eqref{lbeq}.
As the large deviations upper bound is already covered by
\eqref{upeq}, we focus on the matching lower bound. 

\begin{theorem}\label{main}
  Fix $e \in (0,+\infty)$ and let $\mu_{N,e}$ be the family of
  probabilities on $\Sigma_{N, \lfloor Ne \rfloor}$ defined in
  \eqref{df:muN}.
  For each $t^*\in (0, T)$ there exists a path
  $(\bar f, \bar Q)$, satisfying $\sum_{\e} \bar f_t(\e) \e =e$ for
  $t\in [0, t^*)$ and $\sum_{\e} \bar f_t(\e) \e < e$ for
  $t\in [t^*, T]$, such that $I(\bar f, \bar Q)< +\infty$ and
  \begin{equation}\label{2.16}
    \varliminf_{N\to\infty} \frac 1 N \log \bb
    P^N_{\mu_{N,e}}\big((\pi^N, Q^N) \in \mathcal O \big)\geq - I(\bar
    f, \bar Q), 
  \end{equation} 
 for any open neighborhood $\mathcal O\ni (\bar f, \bar Q)$.
\end{theorem}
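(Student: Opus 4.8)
The plan is to exhibit the path $(\bar f,\bar Q)$ explicitly and then to establish \eqref{2.16} by the classical change--of--measure (tilting) argument for large deviation lower bounds.

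\emph{Construction of $(\bar f,\bar Q)$.} I would build the path in two phases, starting from $\bar f_0=m_e$, so that $H_e(\bar f_0)=0$ by \eqref{eq:me}--\eqref{def:H}. On $[0,t^*)$ the path is energy conserving, but it cannot be the pure solution of \eqref{BE} there: since membership in $\ms S^{\mathrm{ac}}_e$ --- needed for $J(\bar f,\bar Q)<+\infty$ --- forces $\bar f$ to be weakly continuous on $[0,T]$, while $t\mapsto\sum_\e\e\,\bar f_t(\e)$ is only lower semicontinuous under weak convergence, the prescribed drop of the energy at $t^*$ compels $\bar f_t$, as $t\uparrow t^*$, to store a fixed amount $\Delta:=e-\sum_\e\e\,\bar f_{t^*}(\e)>0$ of energy on levels diverging to $+\infty$. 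Accordingly, during $[0,t^*)$ I add to $Q^{\bar f}$ a corrective flux that nucleates such a high--energy component --- concretely a cascade of collisions pushing energy to ever higher levels $M_t\uparrow+\infty$ --- tuned so that the total energy stays $e$ for $t<t^*$ while $\bar f_{t^*}$ already carries energy $e-\Delta$; on $[t^*,T]$ the path evolves by \eqref{BE} from $\bar f_{t^*}$, conserving the energy $e-\Delta<e$, with $\bar Q=Q^{\bar f}$ there. One then checks that $(\bar f,\bar Q)\in\ms S_e$ solves \eqref{bal} with the stated energy behaviour: this is consistent because the corrective flux makes $\bar Q\notin\hat{\ms S}_e$, so (as noted after the definition of $\hat{\ms S}_e$) \eqref{bal} does not impose energy conservation. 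The key point is $J(\bar f,\bar Q)<+\infty$: the corrective flux is supported on collision channels where $Q^{\bar f}$ is exponentially small in the transferred energy, so although $g:=\de\bar Q/\de Q^{\bar f}$ is very large there, the product $\de Q^{\bar f}\,(g\log g-g+1)$ --- unlike $g$ itself --- remains summable when $\bar Q$ is chosen to decay fast enough in the transferred energy, and the resulting contribution is finite; hence $I(\bar f,\bar Q)=H_e(\bar f_0)+J(\bar f,\bar Q)<+\infty$.

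\emph{Lower bound by tilting.} For every open $\mathcal O$ and every $\bb Q^N\ll\bb P^N_{\mu_{N,e}}$ the entropy inequality gives
\begin{equation*}
  \bb P^N_{\mu_{N,e}}(\mathcal O)\ \ge\ \bb Q^N(\mathcal O)\,
  \exp\!\Big(-\,\frac{\Ent\big(\bb Q^N\,\big|\,\bb P^N_{\mu_{N,e}}\big)+\ee^{-1}}
  {\bb Q^N(\mathcal O)}\Big).
\end{equation*}
It then suffices to produce a family $\bb Q^N$ such that (a) under $\bb Q^N$ the pair $(\pi^N,Q^N)$ converges in probability to $(\bar f,\bar Q)$, whence $\bb Q^N(\mathcal O)\to1$ for each neighbourhood $\mathcal O\ni(\bar f,\bar Q)$, and (b) $\tfrac1N\Ent\big(\bb Q^N\,\big|\,\bb P^N_{\mu_{N,e}}\big)\to I(\bar f,\bar Q)$; taking $\tfrac1N\log$ and $\varliminf_{N\to\infty}$ then yields \eqref{2.16}. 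I would take $\bb Q^N$ with initial law $\mu_{N,e}$ (so the initial relative entropy vanishes, matching $H_e(\bar f_0)=0$) and with the jump rates of $\ms L_N$ multiplied, at a collision $(\e_i,\e_j)\mapsto(\ell,\e_i+\e_j-\ell)$, by the factor read off from $g=\de\bar Q/\de Q^{\bar f}$, so that the typical flux of the tilted chain becomes $\bar Q$; microscopically the corrective part of the tilt drives the combined energy of a colliding pair onto a vanishing fraction of tagged particles, which accumulate a diverging energy and remove a macroscopic amount of it from the bulk. By the Girsanov formula for jump processes, $\tfrac1N\Ent\big(\bb Q^N\,\big|\,\bb P^N_{\mu_{N,e}}\big)$ equals the $\bb Q^N$--expectation of an empirical functional of $(\pi^N,Q^N)$ which, by (a), converges to $\int_0^T\sum_{\e,\e_*,\e',\e'_*}\de Q^{\bar f}\,(g\log g-g+1)=J(\bar f,\bar Q)$, giving (b).

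\emph{Main obstacle.} The hard part is (a), the law of large numbers under the tilted dynamics, and within it the control of the energy near infinity. Because the corrective tilt induces unbounded energy transfers, one must show that under $\bb Q^N$ the empirical measure $\pi^N_t$ stays close to $\bar f_t$ --- the tagged particles remain a negligible fraction and no macroscopic energy is re--injected from high levels into the bulk --- establish tightness of $(\pi^N,Q^N)$ together with a uniform bound on the energy throughput of $Q^N$ below a suitable cutoff, and identify every limit point through the perturbed martingale problem and \eqref{bal}. I expect this to require regularizing the tilt (a finite energy cutoff, or a small fixed tagged fraction) and then a diagonal argument letting the regularization parameters tend to their limits after $N\to\infty$, using the lower semicontinuity of $I$ and its finiteness from the first step to keep the entropy bound asymptotically sharp. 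The remaining ingredients --- the Girsanov computation of the entropy, the elementary entropy inequality, and the verification that $(\bar f,\bar Q)\in\ms S_e$ --- are comparatively routine.
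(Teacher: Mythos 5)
Your overall strategy for the lower bound --- tilting the microscopic dynamics so the target becomes typical, controlling $\frac 1N\Ent(\bb Q^N|\bb P^N_{\mu_{N,e}})$, and passing to the limit via a density/diagonal argument --- is exactly the route taken in the paper (Lemmas~\ref{lemmalungo}--\ref{lemma2}). You also correctly identify that $\bar f_0$ must be $m_e$ to kill the static cost, that the path cannot lie in $\hat{\ms S}_e$, and that the obstruction is weak continuity of $\pi$ versus drop of the energy, forcing a mass escape to high levels.

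The genuine gap is the construction of $(\bar f,\bar Q)$ itself, which is the novel content of Theorem~\ref{main}. Your description --- ``a cascade of collisions pushing energy to ever higher levels $M_t\uparrow+\infty$,'' implemented by ``tagged particles'' --- is only a heuristic picture; you never write down a flux $\bar Q$ and verify finiteness of $J$. The paper does this explicitly by replacing $B$ with the modified kernel $\tilde B$ of \eqref{tB} (only equal--energy particles collide, all the energy going to one of them), so that \eqref{mBE} reduces to a triangular system of ODEs solvable recursively; Proposition~\ref{3.1} gives the key decay estimates, and the time reparametrization $\alpha(t)=t/(1-t/t^*)$ then pushes the asymptotic dispersion to finite time $t^*$. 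Without a construction of this level of explicitness, the claim $J(\bar f,\bar Q)<+\infty$ cannot be checked. Moreover, your supporting claim that ``the corrective flux is supported on channels where $Q^{\bar f}$ is exponentially small in the transferred energy'' is false for the relevant construction: with $\tilde B$, the Radon--Nikodym density is $g=\dot\alpha(t)\,(2\e+1)/2$ on the support, and the summability of $\de Q^{\bar f}\,(g\log g-g+1)$ relies on the only-polynomial bounds $f_t(\e)\le 2/(1+t)$ and $\sum_{\e\ge1}f_t(\e)\lesssim(1+t)^{-1/2}$ of Proposition~\ref{3.1}, not on any exponential decay. Finally, the regularization you propose (energy cutoff or a fixed tagged fraction) is not the one that works naturally here: the paper's perturbed kernel $\tilde B$ is already bounded uniformly in the energies, so the only singularity is the temporal one at $t^*$; cutting time at $t^*-\delta$ gives a bounded tilt, a standard chaos-propagation law of large numbers, and a clean Girsanov computation for the entropy (Lemma~\ref{lemma1}), after which Lemma~\ref{lemma2} handles $\delta\downarrow0$. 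An energy cutoff would require reproving these steps in a less convenient setting.
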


We will provide a self-contained proof of this statement that do not
rely on Theorem~\ref{upperbound}.
In the argument we take
advantage of the fact that the energies are in $\bb N$. However we
expect that the strategy can be extended also to the continuous case.
In Section \ref{sez:perturbed} we
construct a path $(\bar f, \bar Q)$ satisfying the above requirements,
i.e. with evaporating energy for $t>t^*$ and such that
$I(\bar f, \bar Q)< +\infty$.  The lower bound \eqref{2.16} is then
proven in Section \ref{sez:ld}.
For the sake of concreteness, the proposed path $(\bar f, \bar Q)$ has
zero energy for $t\in(t_*, T]$.

\section{Perturbed Boltzmann equation}
\label{sez:perturbed}
Fix $t^*\in (0,T)$. In order to construct a path $(\bar f, \bar Q)$, satisfying $\sum_{\e} \bar f_t(\e) \e =e$ for
  $t\in [0, t^*)$ and $\sum_{\e} \bar f_t(\e) \e < e$ for
  $t\in (t^*, T]$, we start by considering a solution to 
a perturbed Boltzmann equation, namely a Boltzmann equation with a suitable modified collision kernel.

Consider the  collision kernel $\tilde B$ given by
\begin{equation}\label{tB}
  \tilde B(\e, \e_*, \e', \e'_*)=
  \frac 1 2  \delta_{\e, \e_*}\delta_{\e +\e_*, \e'+\e'_*} \big[\delta_{\e', \e+\e_*} + \delta_{\e'_*, \e + \e_*} \big] \id_{\big\{\{\e,\e_*\}\neq \{\e',\e_*'\}\big\}}, 
\end{equation}
that describes a scenario in which only particles with the same energy collide, and in each collision the whole energy is transferred to a single particle.
The Cauchy problem for the corresponding modified homogeneous Boltzmann equation reads
\begin{equation}\label{mBE}
  \begin{cases}\vspace{0.1 cm}
    \displaystyle \partial_t f_t(\e)=\sum_{\e_*, \e', \e'_*} \big[\tilde B (\e', \e'_*, \e, \e_* ) f_t(\e') f_t(\e_*')- \tilde B(\e, \e_*, \e', \e'_*) f_t(\e)f_t(\e_*) \big],\\
    f_0 (\e) = m(\e).
  \end{cases}
 \end{equation} 

\begin{proposition}\label{3.1}
  Assume that $m$ has energy $e$, and 
  let $f$ be the unique solution to \eqref{mBE}.
  Then its energy is
  conserved, i.e. for any
  $t\in [0, +\infty)$ $\sum_{\e \ge 1} f_t(\e)\e =e$, while $f_t$
  weakly converges to $\delta_0$,
  as $t\to +\infty$. Moreover, for every $t\geq 0$,
  \begin{equation}
    \label{eq:stimafe}
    \begin{aligned}
      i)&\ \     f_t(\e) \leq \frac 2 {1+t}, \ \ \text{ for } \e \ge 1,\\
      ii)&\ \   \sum_{\e \ge 1} f_t(\e) \le \frac c{\sqrt{1+t}},\\
      iii)&\ \   \sum_{\e \ge 1} f_t(\e) \log \e \le c\frac 1 {\sqrt{1+t}}
      \big( 1 + \log(1+t)\big)
    \end{aligned}
  \end{equation}
  where $c=c(e)$  does not depend on $t$ and the initial datum $m$.
\end{proposition}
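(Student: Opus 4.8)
The plan is to rewrite the modified collision operator in \eqref{mBE}, with the kernel \eqref{tB}, in completely explicit form, and then read off all the stated properties from the resulting (infinite) system of coupled scalar equations, by means of ODE comparison, the Cauchy--Schwarz inequality and, crucially, the conservation of energy. A direct computation with the Kronecker deltas in \eqref{tB} shows that \eqref{mBE} is equivalent to
\begin{equation*}
  \partial_t f_t(\e)=
  \begin{cases}
    \tfrac12\, f_t(\e/2)^2-f_t(\e)^2 & \text{if }\e\ge 2\text{ is even},\\
    -f_t(\e)^2 & \text{if }\e\ge 1\text{ is odd},\\
    \tfrac12\sum_{a\ge 1}f_t(a)^2 & \text{if }\e=0.
  \end{cases}
\end{equation*}
The subsystem for $\e\ge 1$ is triangular with respect to the partial order $a\prec 2a$ --- each $f_t(\e)$ solves a scalar Riccati equation forced by the already determined $f_t(\e/2)$ --- so its global solvability, uniqueness, smoothness and nonnegativity ($\partial_t f_t(\e)\ge -f_t(\e)^2$ rules out a sign change) are immediate, and $f_t(0)$ is then recovered by a single quadrature. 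Since existence and uniqueness are part of the hypothesis, I take this structure for granted.

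For item (i) I would prove $f_t(\e)\le h(t):=2/(1+t)$ for every $t\ge 0$ by strong induction on $\e\ge1$. If $\e$ is odd, then $f_t(\e)=m(\e)/(1+t\,m(\e))\le 1/(1+t)\le h(t)$, using $m(\e)\le1$ and the monotonicity of $x\mapsto x/(1+tx)$. If $\e=2\e'$ with $\e'\ge1$, the inductive hypothesis $f_t(\e')\le h(t)$ gives $\tfrac12 f_t(\e')^2-h(t)^2\le -\tfrac12 h(t)^2=-2/(1+t)^2=\dot h(t)$, so $h$ is a supersolution of the equation satisfied by $f_t(\e)$; since $h(0)=2>m(\e)=f_0(\e)$, the comparison principle yields $f_t(\e)\le h(t)$. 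For the conservation laws I would argue by truncation: $\frac{d}{dt}\sum_{\e=1}^N\e f_t(\e)=-\sum_{N/2<\e\le N}\e f_t(\e)^2\le 0$, whence $\sum_{\e\ge1}\e f_t(\e)\le \sum_\e\e\,m(\e)=e$ for all $N,t$; integrating in time and letting $N\to\infty$ (the integrand is bounded by $\sum_{\e\ge1}\e f_s(\e)^2\le 2e$, by (i) and the previous bound, and tends pointwise to $0$) gives $\sum_{\e\ge1}\e f_t(\e)=e$. The same computation with weight $1$, together with the equation for $f_t(0)$, yields $\sum_\e f_t(\e)=1$.

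For (ii) set $u(t):=\sum_{\e\ge1}f_t(\e)$; then $\dot u=-\tfrac12\sum_{\e\ge1}f_t(\e)^2\le 0$, so $u$ is nonincreasing and $u(t)\le u(0)\le\min(1,e)$. The crucial point is a lower bound on $\sum_{\e\ge1}f_t(\e)^2$ through the conserved energy: since $\sum_{\e>K}f_t(\e)\le K^{-1}\sum_{\e>K}\e f_t(\e)\le e/K$, the choice $K=\lceil 2e/u(t)\rceil\le 4e/u(t)$ forces a mass at least $u(t)/2$ to sit in the $K$ bins $\{1,\dots,K\}$, and Cauchy--Schwarz gives $(u(t)/2)^2\le K\sum_{\e\ge1}f_t(\e)^2\le \tfrac{4e}{u(t)}\sum_{\e\ge1}f_t(\e)^2$, i.e. $\sum_{\e\ge1}f_t(\e)^2\ge u(t)^3/(16e)$. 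Hence $\dot u\le -u^3/(32e)$, which integrates to $u(t)^{-2}\ge 1+t/(16e)$, so $u(t)\le c(e)(1+t)^{-1/2}$. From (ii), $\sum_\e f_t(\e)\phi(\e)=f_t(0)\phi(0)+r_t$ with $|r_t|\le\|\phi\|_\infty u(t)\to0$, so $f_t$ converges weakly to $\delta_0$. Finally for (iii) I would split $\sum_{\e\ge1}f_t(\e)\log\e$ at $K:=\max(3,\sqrt{1+t})$: the contribution of $\e\le K$ is $\le(\log K)\,u(t)$, while that of $\e>K$ is $\le \tfrac{\log K}{K}\sum_{\e>K}\e f_t(\e)\le e(\log K)/K$, since $x\mapsto x^{-1}\log x$ decreases on $[3,\infty)$; plugging in (ii), and treating $t\le 8$ separately (there both sides of (iii) are comparable, up to constants depending only on $e$), gives the claimed bound with $c$ depending only on $e$.

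The only genuinely non-routine ingredient is the lower bound $\sum_{\e\ge1}f_t(\e)^2\gtrsim u(t)^3/e$ used in (ii): this is exactly where the micro-canonical energy constraint enters, preventing the surviving mass from spreading over arbitrarily high energy levels, where $\sum f^2$ would be negligible. Without it one would only obtain the strictly weaker decay $u(t)\lesssim t^{-1/3}$, which does not match the $(1+t)^{-1/2}$ rate in \eqref{eq:stimafe}. The remaining steps --- the Riccati comparisons for (i), the truncation argument for the conservation laws, and the dyadic split for (iii) --- are standard, the only care being the bookkeeping of the ($e$-dependent) constants.
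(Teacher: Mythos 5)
Your proof is correct, and its overall skeleton---write out the triangular system for $f_t(\e)$ explicitly, prove the pointwise bound $f_t(\e)\le 2/(1+t)$ by comparison, establish $\sum_{\e\ge1}\e f_t(\e)=e$ by a dyadic truncation, then split at a $t$-dependent cutoff for (iii)---mirrors the paper's. The one place you take a genuinely different route is item (ii). The paper obtains $\sum_{\e\ge1}f_t(\e)\le c/\sqrt{1+t}$ directly from (i) together with Chebyshev (Markov) applied to the conserved energy: $\sum_{\e\le K}f_t(\e)\le 2K/(1+t)$ and $\sum_{\e>K}f_t(\e)\le e/K$, then optimize over $K\sim\sqrt{e(1+t)}$. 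You instead close a self-improving differential inequality $\dot u\le -u^3/(32e)$ on $u(t)=\sum_{\e\ge1}f_t(\e)$, where the key lower bound $\sum_{\e\ge1}f_t(\e)^2\gtrsim u^3/e$ comes from Cauchy--Schwarz plus the energy constraint confining at least half the mass to the first $O(e/u)$ levels. Your route has the aesthetic feature of bypassing (i) entirely in the proof of (ii), at the cost of a somewhat longer argument; the paper's split is shorter. The two treatments of (i) are essentially equivalent: the paper passes to $\xi_t(\e)=(1+t)f_t(\e)$ and runs a first-hitting-time argument showing $\xi$ never reaches $2$, while you phrase the same comparison as a supersolution check with strong induction over the dyadic order $\e\prec 2\e$. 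For (iii) you reproduce what the paper alludes to with its reference to Chebyshev. No gaps; both arguments yield the same $(1+t)^{-1/2}$ rate with $e$-dependent constants.
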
  

\begin{proof}
  We prove eq.s \eqref{eq:stimafe}, from which the convergence of $f_t$ follows.
  The modified Boltzmann equation reads as
  \begin{equation}
    \begin{aligned}
    \dot f_t(0)&= \frac 1 2 \sum_{\e \ge 1}f_t(\e)^2,  \\
    \dot f_t(\e) &=  -f_t(\e)^2\ \
    &\text{for } \e\ge 1 \text{ odd},\\
    \dot f_t(\e) &=  \frac 1 2 f_t(\e/2)^2 -f_t(\e)^2\ \ \ \ 
    &\text{for } \e\ge 2 \text{ even}.\\
   \end{aligned}
\end{equation}
Note that the equation for $f_t(\e)$ involves only $f_t(\e')$ with $\e'\le \e$,
then the system has global and unique solution.
If $\e$ is odd,
$$f_t(\e)  = \frac {f_0(\e)}{1+tf_0(\e)} \le \frac 1{1+t}.$$
If $\e$ is even, set $\xi_t(\e) = (1+t) f_t(\e)$.
Let $T_0$ be the first time $t$ 
such that $\xi_t(\e') =2$ for some $\e'\le \e$.
The time $T_0$ is strictly
positive, since $\xi_0(\e') \le 1$ for any $\e'$.
For $t<T_0$ it holds
$$\dot \xi_t(\e) = \frac 1{1+t} \left( \xi_t(\e) - \xi_t^2(\e) + \frac 12
  \xi_t^2(\e/2)\right) < \xi_t(\e) - \xi_t^2(\e) + 2\le 3(2-\xi_t(\e)),$$
then $T_0=+\infty$,  and this concludes the proof
of $i)$ in \eqref{eq:stimafe}.

In order to prove $ii)$ and $iii)$  we first note that
\begin{equation}
  \label{eq:energia}
  \sum_0^{2^n}\e f_t(\e) =
  \sum_0^{2^n}\e f_0(\e)
  -\int_0^t \de s \sum_{2^{n-1}+1}^{2^n} \e f_s(\e)^2,
\end{equation}
and then  $\sum_0^{+\infty} \e f_t(\e) \le e$.
Inequality $ii)$ and $iii)$ follows by using this fact
and the Chebyshev's inequality.
We conclude the proof by noticing that the energy is in fact conserved,
since  $f_t(\e) \le e/\e$
and then for any $h\ge 1$,
$$\sum_{h}^{+\infty} \e f_t(\e)^2 \le e \max_{\e\ge h} f_t(\e) \le
\frac {e^2}{h},$$
which assures that the right-hand-side of  eq. \eqref{eq:energia},
is vanishing as $n\to +\infty$
\end{proof}

For the modified Boltzmann equation with rate in
\eqref{tB}  the energy vanishes for dispersion to infinity as
$t\to +\infty$.
We reparametrize the time  so that
this happens at a finite time.
Fixed $t^*\in (0,T)$, let $\alpha \colon [0, t^*)\to [0, +\infty)$ given
by $\alpha(t)=\frac t {1-t/t^*}$.
Letting $f$ the solution to \eqref{mBE}, set
\begin{equation}
  \bar f_t (\e) =
  \begin{cases}
    f_{\alpha(t)}(\e) & t\in[0, t^*)\\
      \delta_{\e, 0}  & t\in [t^*, T],
  \end{cases}  
\end{equation}
which satisfies the homogeneous Boltzmann equation with time dependent collision kernel
  \begin{equation}\label{Bbar}
    \bar B_t(\e, \e_*, \e', \e'_* )= \begin{cases}
      \dot\alpha(t) \tilde B (\e, \e_*, \e', \e'_* )
      & t\in [0, t^*)\\
        0 & t\in [t^*, T].
      \end{cases}
    \end{equation}  
    We define the corresponding flux $d\bar Q = \de t \, \bar q_t$, where 
    \begin{equation}
      \bar q_t(\e, \e_*, \e', \e'_*)= \frac 12
      \bar f_{t}(\e)\bar f_{t}(\e_*) \tilde B_t (\e, \e_*, \e', \e'_* ),
    \end{equation}
    so that the pair $(\bar f, \bar Q)$ satisfies the balance equation
    \eqref{bal}. Observe that, by construction,
    $\sum_{\e} \bar f_t(\e) \e =e$ for $t\in [0, t^*)$ and
    $\sum_{\e} \bar f_t(\e) \e =0$ for $t\in [t^*, T]$.
    We now show
    that the pair $(\bar f, \bar Q)$ is such that
    $I(\bar f, \bar Q)<+\infty$.  Since $\bar f_0 =m$, it is enough to
    show $J(\bar f, \bar Q)<+\infty$. This is stated in the next
    Proposition.

\begin{proposition}\label{prop2}
  For $(\bar f, \bar Q)$ defined above the dynamical rate function
  $J(\bar f, \bar Q)$ is finite.
\end{proposition}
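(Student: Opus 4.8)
The plan is to verify that $(\bar f,\bar Q)\in\ms S_e^{\mathrm{ac}}$, to identify the Radon--Nikodym density $\rho=\de\bar Q/\de Q^{\bar f}$ explicitly, and then to estimate the integral defining $J$ in \eqref{5}. First, $\bar f\in C\big([0,T];\ms P_e(\bb N)\big)$: on $[0,t^*)$ each coordinate $t\mapsto\bar f_t(\e)=f_{\alpha(t)}(\e)$ is continuous since $f$ solves the ODE system in the proof of Proposition~\ref{3.1}, and $\bar f_t\to\delta_0$ weakly as $t\uparrow t^*$ by the convergence statement there together with $\alpha(t)\to+\infty$; on $[t^*,T]$ the path is constant. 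That $(\bar f,\bar Q)$ satisfies the balance equation \eqref{bal} was already noted in the construction, so it only remains to compute $\rho$ and to bound $J$.

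Both $\bar Q$ and $Q^{\bar f}$ vanish on $[t^*,T]$ --- the first because $\bar B_t\equiv 0$ there, the second because $\bar f_t=\delta_0$ and $B(0,0,\cdot,\cdot)\equiv 0$ --- so the integral in \eqref{5} is supported on $[0,t^*)$. There the kernel $\tilde B$ is carried by the quadruples $(k,k,2k,0)$ and $(k,k,0,2k)$ with $k\ge 1$, all of which lie in the support of $B$, with $B(k,k,2k,0)=B(k,k,0,2k)=\tfrac1{2k+1}$; hence $\bar Q\ll Q^{\bar f}$ with
\[
  \rho_t(k,k,2k,0)=\rho_t(k,k,0,2k)=\frac{\dot\alpha(t)\,(2k+1)}{2},\qquad t\in[0,t^*),\ k\ge 1,
\]
and $\rho_t=0$ on the remaining quadruples charged by $Q^{\bar f}$ (quadruples with $\bar f_t(k)=0$ carry no $Q^{\bar f}$-mass and are irrelevant). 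In particular $(\bar f,\bar Q)\in\ms S_e^{\mathrm{ac}}$.

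Writing $g(\rho)=\rho\log\rho-\rho+1\ge 0$, $g(0)=1$, split $J(\bar f,\bar Q)=A+D$, where $A$ is the contribution of the quadruples with $B>0$ and $\tilde B=0$ (on which $g(\rho_t)=1$) and $D$ that of the quadruples on the support of $\tilde B$. For $A$, using $g(\rho_t)=1$, $\sum_{\e',\e'_*}B(\e,\e_*,\e',\e'_*)\le 1$ and $\sum_{\e,\e_*}\bar f_t(\e)\bar f_t(\e_*)=1$ gives $A\le\int_0^{t^*}\tfrac12\,\de t=t^*/2<+\infty$. For $D$, using $B(k,k,2k,0)=\tfrac1{2k+1}$, the identity $\tfrac1{2k+1}g(\rho)=\tfrac{\dot\alpha(t)}{2}\,\tfrac{g(\rho)}{\rho}$ at $\rho=\rho_t(k,k,2k,0)$, and $\tfrac{g(\rho)}{\rho}=\log\rho-1+\tfrac1\rho\le\log\rho$ for $\rho\ge 1$ (here $\rho_t\ge\tfrac32$, since $\dot\alpha\ge 1$ and $2k+1\ge 3$), one obtains
\[
  D\ \le\ \frac12\int_0^{t^*}\!\!\de t\;\dot\alpha(t)\sum_{k\ge 1}\bar f_t(k)^2\,\log\!\Big(\frac{\dot\alpha(t)\,(2k+1)}{2}\Big).
\]

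The crucial step is to absorb the two sources of blow-up in this bound --- $\dot\alpha(t)\to+\infty$ as $t\uparrow t^*$, and the growth in $k$ --- by the time change $s=\alpha(t)$, under which $\dot\alpha(t)\,\de t=\de s$, $\bar f_t(k)=f_s(k)$ and $\dot\alpha(t)=(1+s/t^*)^2$; this rewrites the last display as $\tfrac12\int_0^\infty\de s\sum_{k\ge 1}f_s(k)^2\big[2\log(1+s/t^*)+\log\tfrac{2k+1}{2}\big]$. Now Proposition~\ref{3.1} does the rest: from $f_s(k)\le\tfrac2{1+s}$ for $k\ge 1$ we get $\sum_k f_s(k)^2\le\tfrac2{1+s}\sum_k f_s(k)\le\tfrac{2c}{(1+s)^{3/2}}$ by $ii)$, and, using $\log(2k+1)\le\log 3+\log k$ for $k\ge 1$ and $iii)$, $\sum_k f_s(k)^2\log(2k+1)\le\tfrac{2c}{(1+s)^{3/2}}\big(\log 3+1+\log(1+s)\big)$; combined with $\log(1+s/t^*)\le\log(1+s)+|\log t^*|$, the $k$-sum is bounded by $C_{e,t^*}(1+\log(1+s))(1+s)^{-3/2}$, which is integrable on $[0,+\infty)$. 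Hence $D<+\infty$ and $J(\bar f,\bar Q)=A+D<+\infty$. The one genuinely delicate point is this last estimate: it is the \emph{quadratic} weight $f_s(k)^2$, together with the extra decay $(1+s)^{-1/2}$ produced by the time reparametrization, that yields the integrable rate $(1+s)^{-3/2}$ and overcomes the logarithmic growth of $g(\rho)/\rho$ --- dropping any one of the bounds $i)$--$iii)$ of Proposition~\ref{3.1} would make the integral diverge.
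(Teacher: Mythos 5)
Your proof is correct and follows essentially the same strategy as the paper's: both identify the Radon--Nikodym density $\de \bar Q/\de Q^{\bar f}$ on the support of $\tilde B$ as $\dot\alpha(t)(2k+1)/2$, perform the change of variable $s=\alpha(t)$ (so $\dot\alpha\,\de t=\de s$ and $\dot\alpha=(1+s/t^*)^2$), and then invoke the estimates $i)$--$iii)$ of Proposition~\ref{3.1} to reduce to the convergent integral $\int_0^{\infty}(1+s)^{-3/2}\big(1+\log(1+s)\big)\,\de s$. The only substantive difference is bookkeeping --- the paper splits $g(\rho)=\rho(\log\rho-1)+1$ into a ``first'' and ``second'' integral while you split by whether a quadruple lies in the support of $\tilde B$ (your $A$ and $D$), and you are more explicit in checking $(\bar f,\bar Q)\in\ms S_e^{\mathrm{ac}}$; the two regroupings produce identical estimates.
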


 \begin{proof}
   Since $(\bar f, \bar Q)\in \ms
   S_e ^{\mathrm{ac}}$, the dynamical rate function defined in eq. \eqref{5}
   is given by
   \begin{equation}
     \label{eq:Jdastimare}
     \frac 12 \int_0^{t^*} \de t
     \sum_{\e,\e_*,\e',\e_*'}
     \bar f_t(\e) \bar f_t(\e_*)\bar B_t   \left(  \log \frac {\bar B_t}{B}-
       1\right) +
   \frac 12 \int_0^T  \bar f_t(\e) \bar f_t(\e_*) B.
   \end{equation}
   For $t<t^*$
   we have  that
   $$\bar B_t \left( \log  \frac {\bar B_t}{B} -1 \right) 
   = \dot  \alpha \tilde B \left( \log \dot \alpha + \log
     \frac {1+2\e}2  - 1\right).$$
   Since $\log \dot \alpha = 2 \log ( 1 + \alpha/t^*)$,
   the first integral is
   $$\frac 1 2 \int_0^{+\infty}d\alpha \sum_{\e \ge 1}
   f_\alpha^2(\e) \left( 2\log \left( 1+\frac \alpha{t^*}\right)  + \log
     \frac {1+2\e}2  - 1\right).$$
   Using \eqref{eq:stimafe}
   we bound this term by
   $$c\int_0^{+\infty} \frac 1{(1+\alpha)^{3/2}}
   (1+\log (1+\alpha))<+\infty,$$
   where $c$ depend only on $e$ and $t^*$.

   The second integral in eq. \eqref{eq:Jdastimare}
   is 
   $$\frac 12 \int_0^{t^*} \de t \left(\sum_{\e\ge 1} f_{\alpha(t)}\right)^2
   + \frac 12 (T-t^*) \le \frac T2,$$
   which completes the proof.
 \end{proof}

\section{Large deviations lower bound}
\label{sez:ld}
In order to explain the strategy to prove \eqref{2.16}, we recall
some basic facts on the large deviations lower bound.
Let $\{P_n\}$ be a sequence of probabilities on a topological space
$\ms X$. Fix $x\in \ms X$ and a open neighborhood $\mathcal O\ni
x$. To obtain a lower bound for $P_n(\mc O)$ we modify the probability
$P_n$ so that $x$ becomes the typical behavior. If we are able to do
so by paying -- as measured by the relative entropy with respect to
$P_n$ -- not too much then we obtain a good lower bound. The precise
statement is summarized in the next lemma, see e.g.\ \cite{Mar} for
its proof.

\begin{lemma}
  \label{lemmalungo}
  Let $\{P_n\}_{n\in \bb N}$ be a sequence of probabilities 
  on a completely regular topological space $\ms X$
  and fix $x\in \ms X$.  Assume that there exists a sequence 
  $\{P_n^x\}$ weakly convergent to $\delta_x$ and such that
  \begin{equation}
    \label{entb}
    \varlimsup_{n\to\infty} \frac 1n \Ent\big(P_n^x \big|
    P_n\big)
    \le I(x)
  \end{equation}
  for some $I\colon \ms X\to [0,+\infty]$.
  Then for any open neighborhood $\mathcal O\ni x$
  \begin{equation*}
    \varliminf_{n\to\infty} \frac 1 n \log P_n (\mathcal O)\geq - I(x).
  \end{equation*}
\end{lemma}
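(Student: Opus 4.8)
The plan is to prove the lemma by the classical change-of-measure (tilting) argument, which reduces everything to the Donsker--Varadhan entropy inequality. Recall that for any two probabilities $\mu,\nu$ on $\ms X$ and any bounded measurable $g$,
\[
\Ent(\mu\vert\nu)\ \ge\ \int g\,d\mu-\log\int \ee^{g}\,d\nu ,
\]
which follows from Jensen's inequality for the concave function $\log$: writing $d\mu=\rho\,d\nu$ one has $\log\int \ee^{g}\rho^{-1}\,d\mu\ge\int(g-\log\rho)\,d\mu=\int g\,d\mu-\Ent(\mu\vert\nu)$, while if $\mu\not\ll\nu$ the inequality is trivial since $\Ent(\mu\vert\nu)=+\infty$. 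If $I(x)=+\infty$ the conclusion is vacuous, so I may assume $I(x)<+\infty$; by \eqref{entb} this forces $\Ent(P^x_n\vert P_n)<+\infty$ for all $n$ large.

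First I would fix the open neighborhood $\mathcal O\ni x$ and apply the inequality with $\mu=P^x_n$, $\nu=P_n$, $g=\lambda\,\id_{\mathcal O}$, $\lambda>0$. Since $\int \ee^{\lambda\id_{\mathcal O}}\,dP_n=1+(\ee^{\lambda}-1)P_n(\mathcal O)$, rearranging gives
\[
P_n(\mathcal O)\ \ge\ \frac{\ee^{\lambda P^x_n(\mathcal O)-\Ent(P^x_n\vert P_n)}-1}{\ee^{\lambda}-1}.
\]
Choosing $\lambda=\lambda_n\coloneqq\big(\Ent(P^x_n\vert P_n)+\log 2\big)/P^x_n(\mathcal O)$ --- legitimate as soon as $P^x_n(\mathcal O)>0$ and $\Ent(P^x_n\vert P_n)<+\infty$ --- makes the numerator equal to $1$, and since $\ee^{\lambda_n}-1<\ee^{\lambda_n}$ we obtain the clean bound
\[
\log P_n(\mathcal O)\ \ge\ -\lambda_n\ =\ -\,\frac{\Ent(P^x_n\vert P_n)+\log 2}{P^x_n(\mathcal O)}.
\]

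Next I would show $P^x_n(\mathcal O)\to 1$, which is the only point where the topological assumption is used: since $\mathcal O^{c}$ is closed and $x\notin\mathcal O^{c}$, complete regularity of $\ms X$ provides a continuous $h\colon\ms X\to[0,1]$ with $h(x)=1$ and $h\equiv 0$ on $\mathcal O^{c}$, and then $P^x_n(\mathcal O)\ge\int h\,dP^x_n\to h(x)=1$ because $P^x_n\to\delta_x$ weakly; together with $P^x_n(\mathcal O)\le 1$ this gives $P^x_n(\mathcal O)\to 1$ and, in particular, $P^x_n(\mathcal O)>0$ for $n$ large, so the previous display holds for such $n$. Dividing it by $n$ and taking $\varliminf$, the numerator $\tfrac1n\Ent(P^x_n\vert P_n)+\tfrac{\log 2}{n}$ is nonnegative with $\varlimsup\le I(x)$ by \eqref{entb}, while the denominator $P^x_n(\mathcal O)\to 1$, so $\varliminf_{n}\tfrac1n\log P_n(\mathcal O)\ge -I(x)$, which is the assertion. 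There is essentially no obstacle in this argument --- it is textbook; the only steps deserving a line of care are the elementary limit manipulation $\varlimsup_n(u_nv_n)\le(\lim_n v_n)\,\varlimsup_n u_n$ for $u_n\ge 0$, $v_n\to v>0$, used to absorb the varying denominator $P^x_n(\mathcal O)$, and the bookkeeping disposing of the degenerate cases $I(x)=+\infty$ and $P_n(\mathcal O)=0$ (the latter impossible for large $n$ once $\Ent(P^x_n\vert P_n)<+\infty$ and $P^x_n(\mathcal O)>0$).
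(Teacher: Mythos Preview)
Your argument is correct. The paper does not actually supply a proof of this lemma; it simply refers the reader to \cite{Mar}. What you have written is the standard tilting proof---the Donsker--Varadhan variational inequality applied to $g=\lambda\,\id_{\mathcal O}$, an optimization in $\lambda$, and the use of complete regularity to obtain $P_n^x(\mathcal O)\to 1$---and it is self-contained and rigorous, so there is nothing in the paper to compare it against; your proof could stand in place of the citation.
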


In most of the applications, and indeed also in our case, the strategy
suggested by the above lemma is implemented together with a density
argument. The family of perturbed probabilities $P_n^x$ is not
constructed for the point $x$ itself but rather for an approximation
$x_k$; if the function $I$ is continuous along the sequence $x_k$ then
this will do as well. We emphasize that in typical infinite
dimensional applications -- as in the present case -- the rate
function $I$ is only lower semicontinuous so the sequence $x_k$ has to
be properly chosen.
We summarize the argument in the next statement which is deduced from
Lemma~\ref{lemmalungo} by a straightforward diagonal argument.

\begin{lemma}
  \label{lemmacorto}
  Let $\{P_n\}_{n\in \bb N}$ be a sequence of probabilities on a
  completely regular topological space $\ms X$, fix $x\in \ms X$, and
  a sequence $x_k\to x$.  Assume that there exists
  $I\colon \ms X\to [0,+\infty]$ meeting the following conditions:
  \begin{itemize}
  \item [(i)] for each $k\in \bb N$ there exists a family
    $\{P_n^{x_k}\}_{n\in\bb N}$
    satisfying the conditions in Lemma~\ref{lemmalungo};
  \item [(ii)] $\varlimsup_k\, I(x_k) \le I(x)$.
  \end{itemize}
  Then, for any open  neighborhoods $\mathcal O\ni x$
  \begin{equation*}
    \varliminf_{n\to\infty} \frac 1 n \log P_n (\mathcal O)\geq - I(x).
  \end{equation*}
\end{lemma}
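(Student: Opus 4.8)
The plan is to deduce Lemma~\ref{lemmacorto} from Lemma~\ref{lemmalungo} by a diagonal extraction, so the main work is bookkeeping with the weak topology and the entropy bounds, not any new probabilistic input. First I would fix $x$, the sequence $x_k\to x$, and for each $k$ the family $\{P_n^{x_k}\}_{n}$ provided by hypothesis (i); by Lemma~\ref{lemmalungo} applied with $x_k$ in place of $x$, for each $k$ and each open neighborhood $\mc O_k\ni x_k$ we already have
\[
  \varliminf_{n\to\infty} \frac 1n \log P_n(\mc O_k)\ge -I(x_k).
\]
So for a given open neighborhood $\mc O\ni x$ the issue is only that $\mc O$ need not contain any $x_k$ as an \emph{interior} point in a uniform way; but since $x_k\to x$ and $\mc O$ is open, there is $k_0$ such that $x_k\in \mc O$ for all $k\ge k_0$, and then $\mc O$ is itself an open neighborhood of each such $x_k$. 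Applying the displayed bound with $\mc O_k=\mc O$ gives $\varliminf_n \frac1n\log P_n(\mc O)\ge -I(x_k)$ for every $k\ge k_0$. Letting $k\to\infty$ and using hypothesis (ii), $\varlimsup_k I(x_k)\le I(x)$, yields
\[
  \varliminf_{n\to\infty}\frac1n\log P_n(\mc O)\ge -\varliminf_{k\to\infty} I(x_k)\ge -\varlimsup_{k\to\infty} I(x_k)\ge -I(x),
\]
which is the claim. In this route the ``diagonal argument'' is essentially trivial because the left-hand side does not depend on $k$; one only needs $x_k$ eventually inside the fixed neighborhood.

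If instead one wants to track the proof of Lemma~\ref{lemmalungo} directly (the tilting construction), the alternative is to build a genuinely diagonal perturbed family: choose an increasing sequence $n\mapsto k(n)\to\infty$ slowly enough that $P_n^{x_{k(n)}}$ still converges weakly to $\delta_x$ (using $x_{k}\to x$ and that each $P_n^{x_k}\Rightarrow \delta_{x_k}$ as $n\to\infty$) and that $\varlimsup_n \frac1n\Ent(P_n^{x_{k(n)}}\mid P_n)\le \varlimsup_k I(x_k)\le I(x)$ by (ii). Then $\{P_n^{x_{k(n)}}\}$ satisfies the hypotheses of Lemma~\ref{lemmalungo} for the point $x$ itself, and the conclusion follows immediately. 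Making the two ``slowly enough'' requirements precise is the only technical point: one fixes a countable convergence-determining family of bounded continuous functions and a diagonal subsequence argument to extract $k(n)$; this is exactly the standard recipe, and the excerpt already signals it by saying the statement is ``deduced from Lemma~\ref{lemmalungo} by a straightforward diagonal argument.''

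I expect no real obstacle here: both implementations are routine. The only subtlety worth a sentence in the writeup is that $I$ is merely lower semicontinuous, so one cannot replace $I(x)$ by $\liminf_k I(x_k)$ for free --- it is precisely hypothesis (ii) that is imposed to bypass this, and it is used exactly once, at the last inequality. I would therefore present the first (shorter) argument as the proof, remarking that the role of (ii) is to control the entropy/rate budget in the limit and that (i) together with openness of $\mc O$ and $x_k\to x$ does the rest.
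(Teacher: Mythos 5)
Your first argument is correct and is actually cleaner than what the paper's remark (``a straightforward diagonal argument'') suggests. The paper does not spell out a proof, but the hint points to the second route you sketch: build a single diagonal family $P_n^{x_{k(n)}}$ converging to $\delta_x$ with the right entropy budget, then invoke Lemma~\ref{lemmalungo} at $x$ itself. Your first route sidesteps this entirely: because the quantity $\varliminf_n \frac1n\log P_n(\mathcal O)$ does not depend on $k$, one simply applies Lemma~\ref{lemmalungo} at each $x_k$ (noting that for $k$ large $\mathcal O$ is itself an open neighborhood of $x_k$), obtains the bound $-I(x_k)$ for every such $k$, and then passes to the limit using (ii). This avoids having to construct and control a diagonal subsequence; in particular it sidesteps the mild topological annoyance that weak convergence $P_n^{x_{k(n)}}\Rightarrow\delta_x$ on a general completely regular space is not automatically governed by a countable convergence-determining family, which is the one place where the diagonal route needs a little more care than your sketch lets on. One small stylistic remark on the chain of inequalities in the first argument: what you actually use is $\varliminf_n\frac1n\log P_n(\mathcal O)\ge \sup_{k\ge k_0}(-I(x_k))=-\inf_{k\ge k_0}I(x_k)\ge -\varlimsup_k I(x_k)\ge -I(x)$; the intermediate $\varliminf_k$ is harmless but unnecessary. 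Both routes are valid, both deduce the lemma from Lemma~\ref{lemmalungo} exactly as the paper intends, and hypothesis (ii) is used exactly once in each, as you correctly observe.
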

To implement condition {\it (i)} and {\it (ii)} in
the previous lemma, 
for $0<\delta <t_*$,
define the pair $(\bar f^\delta, \bar Q^\delta)$ by
 \begin{equation}\label{fl}
\bar f^\delta_t (\e) =
  \begin{cases}
    \bar f_t(\e) & t\in[0, t_*-\delta)\\
    \bar f_{t_*-\delta}(\e)  & t\in [t^*-\delta, T],
  \end{cases}  
 \end{equation}  
 and $d\bar Q^\delta= \de t \,
 \bar q^\delta_t$ with $\bar q^\delta_t = \bar q_t\id_{[0, t_*-\delta)}(t)$.
 Let $\mu_{N,e}$ as in the statement of Theorem
 \ref{main}, 
 and denote by $\bar{\bb P}^{N,\delta}_{\mu_{N,e}}$ the law of the
 microscopic dynamics with the perturbed collision kernel
 $\bar B^\delta= \bar B \id_{[0, t_*-\delta)}(t) $, $\bar B$ in
 \eqref{Bbar}. Then the following two Lemmata imply the large
 deviation lower bound \eqref{2.16}.
 \begin{lemma}\label{lemma1}
For each $\delta\in (0, t_*)$ as $N\to +\infty$  the pair $(\pi^N, Q^N)$ converges in $\bar{\bb P}^{N,\delta}_{\mu_{N,e}}$ probability  to $(\bar f^\delta, \bar Q^\delta)$. Furthermore,
\begin{equation}
  \displaystyle \lim_{N\to +\infty} \frac 1 N\Ent (\bar{\bb P}^{N,\delta}_{\mu_{N,e}}\vert \bb P^N_{\mu_{N,e}}) = I(\bar f^\delta, \bar Q^\delta).
  \end{equation}
  \end{lemma}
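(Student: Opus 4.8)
The plan is to prove the two assertions in sequence, first the entropy identity and then the law of large numbers under the perturbed law. For the \emph{entropy computation}, I would use the Girsanov-type formula for Markov jump processes: since $\bar{\bb P}^{N,\delta}_{\mu_{N,e}}$ and $\bb P^N_{\mu_{N,e}}$ have the same initial distribution $\mu_{N,e}$ and differ only through the collision kernel (the unperturbed generator uses $B$, the perturbed one uses $\bar B^\delta_t$, both acting on the same pair-selection mechanism of rate $\tfrac1N\sum_{\{i,j\}}$), the Radon--Nikodym derivative is explicit. Writing $\lambda^N$, $\bar\lambda^{N,\delta}$ for the total jump rates and using the standard formula $\log\frac{\de\bar{\bb P}}{\de\bb P} = \sum_{\text{jumps}}\log\frac{\bar r(\text{jump})}{r(\text{jump})} - \int_0^T(\bar\lambda^N_s-\lambda^N_s)\,\de s$, one gets
\begin{equation*}
  \Ent(\bar{\bb P}^{N,\delta}_{\mu_{N,e}}\vert \bb P^N_{\mu_{N,e}})
  = \bar{\bb E}^{N,\delta}_{\mu_{N,e}}\Big[\sum_{\text{jumps}}\log\frac{\bar r}{r} - \int_0^T(\bar\lambda^N_s-\lambda^N_s)\,\de s\Big].
\end{equation*}
Rewriting the sum over jumps as an integral against the empirical flow $Q^N$ and the compensator term via $\pi^N$, this equals $N$ times a functional of $(\pi^N,Q^N)$ that converges, under $\bar{\bb P}^{N,\delta}_{\mu_{N,e}}$-probability, to the same functional evaluated at $(\bar f^\delta,\bar Q^\delta)$, which is exactly $J(\bar f^\delta,\bar Q^\delta) = I(\bar f^\delta,\bar Q^\delta)$ (the $H_e$ term vanishes since the initial datum is $m_e$ with energy $e$). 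Here one must be careful that the integrand is $\bar B^\delta_t\big(\log\frac{\bar B^\delta_t}{B}-1\big)+B$ weighted by $\bar f_t(\e)\bar f_t(\e_*)$, matching \eqref{5}; uniform integrability, needed to pass the expectation to the limit, follows from the quantitative bounds \eqref{eq:stimafe} in Proposition~\ref{3.1} (which control $\sum_\e f_t(\e)$, $\sum_\e f_t(\e)\log\e$ and hence the entropic cost), exactly as in the proof of Proposition~\ref{prop2}.

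For the \emph{law of large numbers}, I would follow the classical Kac-walk / chaos-propagation route. Tightness of $(\pi^N,Q^N)$ under $\bar{\bb P}^{N,\delta}_{\mu_{N,e}}$ on $D([0,T];\ms P_e(\bb N))\times\ms M$ is standard: the flow $Q^N$ has total mass controlled by the number of collisions, whose expectation is $O(1)$ per particle uniformly in $N$ because $\sup_{\e,\e_*}\sum_{\e',\e'_*}\bar B^\delta_t(\e,\e_*,\e',\e'_*)$ is bounded (the perturbed kernel, being $\dot\alpha(t)$ times $\tilde B$ on $[0,t_*-\delta)$ where $\dot\alpha$ is bounded, and $0$ afterwards, has bounded marginal rate), and $\pi^N$ inherits equicontinuity from the balance equation \eqref{bal1}. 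Any limit point $(\pi,Q)$ satisfies \eqref{bal}, and a martingale/propagation-of-chaos argument — using that the two-particle correlations vanish in the limit, since the initial datum $\mu_{N,e}$ is asymptotically $m_e$-chaotic by equivalence of ensembles and the local CLT — forces $Q(\de t;\e,\e_*,\e',\e'_*) = \tfrac12\,\de t\,\pi_t(\e)\pi_t(\e_*)\bar B^\delta_t(\e,\e_*,\e',\e'_*)$ and $\pi$ to solve the perturbed Boltzmann equation \eqref{mBE} (time-changed by $\alpha$ and frozen after $t_*-\delta$). By the Gronwall uniqueness for this equation — available since the marginal collision rate is bounded, as remarked after \eqref{BE} — the only such path is $(\bar f^\delta,\bar Q^\delta)$; hence the full sequence converges.

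The main obstacle I anticipate is \emph{not} the qualitative law of large numbers but the interchange of limit and expectation in the entropy identity: one needs the collection $\{\tfrac1N\log\frac{\de\bar{\bb P}^{N,\delta}}{\de\bb P^N}\}_N$ to be uniformly integrable under $\bar{\bb P}^{N,\delta}_{\mu_{N,e}}$, and the summand $\log\frac{\bar B^\delta}{B}$ contains the term $\log\frac{1+2\e}{2}$, which grows with the colliding energy $\e$. Controlling its contribution requires the bound $\sum_\e \bar f_t(\e)\log\e \le c(1+\log(1+t))/\sqrt{1+t}$ from \eqref{eq:stimafe}$iii)$ together with a matching concentration estimate for $\pi^N_t$ under the perturbed law — i.e. showing that the empirical measure cannot put anomalously much mass on high energies, which again uses that the initial energy is $e$ and the dynamics only moves energy downward (on the perturbed kernel). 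Once this uniform integrability is in place, both the $\le$ and $\ge$ halves of the entropy identity follow: the lower bound from Fatou together with lower semicontinuity of $J$, and the upper bound from the explicit convergence of the integrand plus uniform integrability. The cutoff at $t_*-\delta$ is what makes all these estimates uniform in $N$ (the rate function stays strictly finite and $\bar f^\delta$ is bounded away from the degenerate $\delta_0$ profile on $[0,t_*-\delta]$), so the $\delta>0$ restriction is essential here and is removed only afterwards via the diagonal argument of Lemma~\ref{lemmacorto}.
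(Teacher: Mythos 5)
Your overall strategy matches the paper's: compute the relative entropy via the Girsanov formula for Markov jump processes, express it as $N$ times a functional of $(\pi^N,Q^N)$, pass to the limit using the law of large numbers under the perturbed law, and handle the LLN itself by chaos propagation (which the paper treats as classical once it observes $\sup_t\sup_{\e,\e_*}\sum_{\e',\e'_*}\bar B^\delta_t\le c_\delta$) together with the explicit martingale decomposition for $Q^N(F)$. You also correctly identify the one delicate point, namely the uniform integrability of $Q^N(F^\delta)$ needed to interchange limit and expectation, since $F^\delta_t=\log(\bar B^\delta_t/B)$ contains the unbounded factor $\log\frac{2\e+1}{2}$.

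Where you diverge — and where your plan is on shakier ground — is in how to establish that uniform integrability. You propose to combine the decay estimate \eqref{eq:stimafe}~$iii)$ with a ``matching concentration estimate for $\pi^N_t$ under the perturbed law,'' and you justify this by asserting that the perturbed dynamics ``only moves energy downward.'' That side remark is incorrect: under $\tilde B$ a collision sends $(\e,\e)\mapsto(2\e,0)$, so one particle's energy \emph{doubles}; only the pair sum (and hence the total) is conserved. More importantly, \eqref{eq:stimafe} bounds the deterministic solution $\bar f_t$, not the random empirical measure $\pi^N_t$, so invoking it here is circular without the very concentration estimate you acknowledge is still missing. The paper avoids all of this with a direct $L^2$ bound: since $|F^\delta|\le c_\delta(1+\log(1+\e+\e_*))$ grows only logarithmically in the colliding energies, and since the compensator in \eqref{m1} and the predictable quadratic variation $\langle M^{F^\delta}\rangle_T$ are bounded uniformly in $N$ (using $\sum_\e \e\,\pi^N_t(\e)\le e$ from energy conservation), one gets $\bar{\bb E}^{N,\delta}_{\mu_{N,e}}\big[Q^N(F^\delta)^2\big]\le c_\delta$, which is exactly the uniform integrability required. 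If you replace your proposed concentration argument by this martingale $L^2$ estimate, the proof goes through as you outlined.
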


  \begin{lemma}\label{lemma2}
    As $\delta\downarrow 0$ we have  $(\bar f^\delta, \bar Q^\delta)\to (\bar f, \bar Q)$ and
    $I(\bar f^\delta, \bar Q^\delta)\to I(\bar f, \bar Q)$.
  \end{lemma}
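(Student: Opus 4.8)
The plan is to prove Lemma~\ref{lemma2} by splitting $I(\bar f^\delta,\bar Q^\delta) = H_e(\bar f^\delta_0) + J(\bar f^\delta,\bar Q^\delta)$ and observing that the entropic term is trivial, since $\bar f^\delta_0 = \bar f_0 = m$ for every $\delta$, so $H_e(\bar f^\delta_0) = H_e(m) = H_e(\bar f_0)$ and only the dynamical part $J$ has to be controlled. For the convergence $(\bar f^\delta,\bar Q^\delta)\to(\bar f,\bar Q)$: on the flux side, $d\bar Q^\delta = \de t\,\bar q_t \id_{[0,t_*-\delta)}(t)$ differs from $d\bar Q = \de t\,\bar q_t \id_{[0,t_*)}(t)$ only on the vanishing time interval $[t_*-\delta,t_*)$, and since $\bar q_t$ has finite total mass near $t_*$ (its total mass is $\tfrac12\dot\alpha(t)(\sum_{\e\ge1} f_{\alpha(t)}(\e))^2$, integrable by Proposition~\ref{prop2}), dominated convergence gives $\bar Q^\delta\to\bar Q$ in total variation, hence weakly$*$. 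On the measure side, $\bar f^\delta_t = \bar f_t$ for $t<t_*-\delta$ and $\bar f^\delta_t \equiv \bar f_{t_*-\delta}$ afterwards, while $\bar f_t = \bar f_{t_*-\delta}$ only up to $t_*-\delta$ and then relaxes to $\delta_0$ over $[t_*-\delta,t_*)$ and stays there; since $\bar f_{t_*-\delta}\to\delta_0$ weakly as $\delta\downarrow0$ by Proposition~\ref{3.1}, and $\bar f$ is (right-)continuous, the two paths get uniformly close in the Skorokhod metric, giving $\bar f^\delta\to\bar f$ in $D([0,T];\ms P_e(\bb N))$.

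For the convergence of the rate function, I would first record the explicit formula for $J(\bar f^\delta,\bar Q^\delta)$ by the same computation as in the proof of Proposition~\ref{prop2}: since $(\bar f^\delta,\bar Q^\delta)\in\ms S_e^{\mathrm{ac}}$ with the same density $d\bar Q^\delta/d Q^{\bar f^\delta} = \bar B^\delta / B$ on its support,
\begin{equation*}
  J(\bar f^\delta,\bar Q^\delta) = \frac12\int_0^{t_*-\delta}\!\!\de t
  \sum_{\e\ge1} f_{\alpha(t)}(\e)^2\,\dot\alpha(t)\Big(2\log\big(1+\tfrac{\alpha(t)}{t^*}\big)+\log\tfrac{1+2\e}{2}-1\Big)
  + \frac12\int_0^T\!\!\de t\sum_{\e,\e_*,\e',\e_*'}\bar f^\delta_t(\e)\bar f^\delta_t(\e_*)B.
\end{equation*}
After the change of variables $\alpha = \alpha(t)$ the first term becomes an integral over $\alpha\in[0,\alpha(t_*-\delta)]$ of an integrand that is, by the bounds in \eqref{eq:stimafe}, dominated by $c(1+\alpha)^{-3/2}(1+\log(1+\alpha))\in L^1([0,+\infty))$; as $\delta\downarrow0$ we have $\alpha(t_*-\delta)\to+\infty$, so by monotone/dominated convergence this term converges to the corresponding integral over $[0,+\infty)$, which is exactly the first term in \eqref{eq:Jdastimare}, i.e. the contribution to $J(\bar f,\bar Q)$. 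The second (gain-type) term equals $\tfrac12\int_0^{t_*-\delta}(\sum_{\e\ge1} f_{\alpha(t)}(\e))^2\de t + \tfrac12\int_{t_*-\delta}^{t_*}(\sum_{\e\ge1}\bar f_{t_*-\delta}(\e))^2\de t$ (the interval $[t_*,T]$ contributing zero since $\bar f^\delta_t=\delta_0$ there, same as for $\bar f$); the first piece converges to $\tfrac12\int_0^{t_*}(\sum_{\e\ge1} f_{\alpha(t)}(\e))^2\de t$ and the second is bounded by $\tfrac\delta2\to0$, so altogether the second term converges to the second term in \eqref{eq:Jdastimare}. Hence $J(\bar f^\delta,\bar Q^\delta)\to J(\bar f,\bar Q)$, and adding $H_e(m)$ gives $I(\bar f^\delta,\bar Q^\delta)\to I(\bar f,\bar Q)$.

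The only mild subtlety, and the step I would be most careful about, is the Skorokhod convergence $\bar f^\delta\to\bar f$: both paths have their single jump (to $\delta_0$, or the approach to it) near $t_*$, but $\bar f$ actually moves continuously on $[0,t_*)$ and is flat $=\delta_0$ on $[t_*,T]$, whereas $\bar f^\delta$ is flat $=\bar f_{t_*-\delta}$ on all of $[t_*-\delta,T]$; one should exhibit an explicit time-reparametrization $\lambda_\delta$ (identity on $[0,t_*-\delta]$, stretching $[t_*-\delta,T]$ linearly onto $[t_*-\delta,T]$ in a way that matches up the plateaus, or more simply note $\sup_t d_{\ms P_e}(\bar f^\delta_t,\bar f_{\lambda_\delta(t)})$ is controlled by $\sup_{s\in[t_*-\delta,t_*]} d_{\ms P_e}(\bar f_s,\delta_0)\to0$) to check that the Skorokhod distance indeed goes to $0$. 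Since the weak topology on $\ms P_e(\bb N)$ is metrizable and $\bar f_{t_*-\delta}\to\delta_0$, this is routine but deserves to be spelled out. Everything else is dominated convergence against the integrable majorant furnished by Proposition~\ref{3.1}.
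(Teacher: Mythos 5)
Your overall strategy — splitting $I=H_e+J$, noting $H_e(\bar f^\delta_0)=H_e(m)$ is $\delta$-independent, and controlling $J$ by the explicit change-of-variable and dominated convergence against the majorant from Proposition~\ref{3.1} — is exactly what the paper does (the paper just says ``by the arguments in the proof of Proposition~\ref{prop2} and dominated convergence''), and your treatment of the Skorokhod convergence is sound: since both $\bar f$ and $\bar f^\delta$ are \emph{continuous} paths ($\bar f$ is continuous even at $t^*$, as $f_{\alpha(t)}\to\delta_0$ weakly), Skorokhod convergence reduces to uniform convergence, which you get from uniform continuity of $\bar f$ on $[0,T]$.

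There are, however, two concrete slips worth fixing. First, you assert that $\bar f^\delta_t=\delta_0$ on $[t_*,T]$ ``same as for $\bar f$''; this is false. By \eqref{fl}, $\bar f^\delta_t=\bar f_{t_*-\delta}=f_{\alpha(t_*-\delta)}$ for \emph{all} $t\in[t_*-\delta,T]$, and this is not $\delta_0$. Consequently the interval $[t_*,T]$ does \emph{not} contribute zero to the gain term for $\bar f^\delta$: its contribution is
\[
\tfrac12\,(T-t_*)\sum_{\e,\e_*,\e',\e_*'}\bar f_{t_*-\delta}(\e)\,\bar f_{t_*-\delta}(\e_*)\,B(\e,\e_*,\e',\e_*'),
\]
which is $O\big((\sum_{\e\ge1}f_{\alpha(t_*-\delta)}(\e))^2\big)=O\big(1/(1+\alpha(t_*-\delta))\big)\to 0$ as $\delta\downarrow0$ by \eqref{eq:stimafe}\,$ii)$, since $\alpha(t_*-\delta)\to+\infty$. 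So your conclusion survives, but the reason is the decay of $f_s$ at large $s$, not the incorrect identity $\bar f^\delta_t=\delta_0$. Second, a smaller slip in the flux paragraph: the total mass of $\bar q_t$ is $\frac12\dot\alpha(t)\sum_{\e\ge1}f_{\alpha(t)}(\e)^2$ (the $\delta_{\e,\e_*}$ in $\tilde B$ collapses the double sum), not $\frac12\dot\alpha(t)\big(\sum_{\e\ge1}f_{\alpha(t)}(\e)\big)^2$. The distinction matters: $\big(\sum f\big)^2\lesssim(1+\alpha)^{-1}$ is \emph{not} integrable in $\alpha$, whereas $\sum f^2\lesssim(1+\alpha)^{-3/2}$ (combining $i)$ and $ii)$ of \eqref{eq:stimafe}) is, and this is what makes $\bar Q$ a finite measure and justifies dominated convergence for the flux.
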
  

  \begin{proof}[Proof of Lemma \ref{lemma1}]
    By definition of $\bar B^\delta$, $\sup_{t}\sup_{\e,\e'}\sum_{\e', \e'_*}\bar B_t^\delta(\e, \e', \e'_*)\leq c_\delta$.
    Therefore, by classical chaos propagation argument,
    $\pi^N$ converges in $\bar{\bb P}^{N,\delta}_{\mu_{N,e}}$ probability to $\bar f^\delta$. To deduce the convergence of the empirical  flow, it is enough to observe that for each bounded $F_t(\e, \e_*, \e', \e'_*)$
    \begin{equation}\label{m1}\begin{split}
        M^F_t\coloneqq & \int_0^t\sum_{\e, \e_*, \e', \e'_*} Q^N(\de s, \e, \e_*, \e', \e'_*)F_s(\e, \e_*, \e', \e'_*)\\
        &-\frac 1 2 \int_0^t \de s \sum_{\e, \e_*, \e', \e'_*} \pi_s^N(\e)\pi_s^N(\e_*)\bar B_s^\delta(\e, \e_*, \e', \e'_*)F_s(\e, \e_*, \e', \e'_*)\\
        &+\frac 1 2 \frac 1 N\int_0^t \de s \sum_{\e, \e', \e'_*} \pi_s^N(\e)\bar B_s^\delta(\e, \e, \e', \e'_*)F_s(\e, \e, \e', \e'_*)
   \end{split} \end{equation}  
    is a $\bar{\bb P}^{N,\delta}_{\mu_{N,e}}$ martingale with predictable quadratic variation
    \begin{equation*}\begin{split}
\langle M^F \rangle_t = & \frac 1 2 \frac 1 N \int_0^t \de s \sum_{\e, \e_*, \e', \e'_*} \pi_s^N(\e)\pi_s^N(\e_*)\bar B_s^\delta(\e, \e_*, \e', \e'_*)F^2_s(\e, \e_*, \e', \e'_*)\\
         &-\frac 1 2 \frac 1 {N^2}\int_0^t \de s \sum_{\e, \e', \e'_*} \pi_s^N(\e)\bar B_s^\delta(\e, \e, \e', \e'_*)F^2_s(\e, \e, \e', \e'_*).
    \end{split}\end{equation*}  
Set $F^\delta_t(\e, \e_*, \e', \e'_*)=\log(\bar B^\delta_t/B)$. By standard Markov chain computation, the relative entropy  of $\bar{\bb P} ^{N,\delta}_{\mu_{N,e}}$ with respect to $\bb P ^N_{\mu_{N,e}}$ is given by
\begin{equation}\begin{split}\label{ent}
  & \frac 1 N \, \Ent (\bar{\bb P}^{N,\delta}_{\mu_{N,e}}\vert \bb P^N_{\mu_{N,e}})
  \\ & =\bar{\bb E}^{N,\delta}_{\mu_{N,e}}\Big (Q^N (F^\delta)-\frac 1 2\int_0^T\!\! \de t\, \sum_{\e, \e_*, \e', \e'_*}\pi_t^N(\e) \pi_t^N(\e_*)\big[\bar B^\delta_t(\e, \e_*, \e', \e'_*) -B(\e, \e_*, \e', \e'_*) \big]                    \\
  & +\frac 1 N  \frac 1 2\int_0^T\!\! \de t\, \sum_{\e, \e', \e'_*}\pi_t^N(\e)\big[\bar B^\delta_t(\e, \e, \e', \e'_*) -B(\e, \e, \e', \e'_*) \big]\Big).
\end{split}\end{equation}
Since $\sup_{t}\sup_{\e,\e'}\sum_{\e', \e'_*}\bar B_t^\delta(\e, \e', \e'_*)\leq c_\delta$, by the law of large numbers
\begin{equation*}\begin{split}
 & \lim_{N\to\infty}\bar{\bb E}^{N,\delta}_{\mu_{N,e}}\Big (\frac 1 2\int_0^T\!\! \de t\, \sum_{\e, \e_*, \e', \e'_*}\pi_t^N(\e) \pi_t^N(\e_*)\big[\bar B^\delta_t(\e, \e_*, \e', \e'_*) -B(\e, \e_*, \e', \e'_*) \big]             \Big)\\
 & =\frac 1 2\int_0^T\!\! \de t\, \sum_{\e, \e_*, \e', \e'_*}\bar f_t^\delta(\e) \bar f_t^\delta(\e_*)\big[\bar B^\delta_t(\e, \e_*, \e', \e'_*) -B(\e, \e_*, \e', \e'_*) \big],            
 \end{split}\end{equation*} 
while the last term on the right hand side of \eqref{ent} vanishes as $N$ diverges.
Again, by the law of large numbers, in order to prove $\bar{\bb E}^{N,\delta}_{\mu_{N,e}}\Big (Q^N (F^\delta)\Big) \to \bar Q ^\delta(F^\delta)$ it is enough to show the uniform integrability of $Q^N(F^\delta)$ with respect to $\bar{\bb P}^{N,\delta}_{\mu_{N,e}}$. By exploiting the martingale decomposition \eqref{m1}, since $|F^\delta|\leq c_\delta(1+\log(1+\e+\e_*))$, a direct computation yields
$\bar{\bb E}^{N,\delta}_{\mu_{N,e}}\Big( Q^N(F^\delta)^2\Big) \leq c_\delta$, which implies the requested uniform integrability.
Observing that  $H_e(\bar f^\delta_0)=0$, and recalling \eqref{5} and \eqref{ihj}, the proof is concluded.
  \end{proof}

  \begin{proof}[Proof of Lemma \ref{lemma2}]
  The convergence of $(\bar f^\delta, \bar Q^\delta)$ to $(\bar f, \bar Q)$ follows from \eqref{fl}, the continuity of $t\mapsto \bar f_t$, and the integrability of $\bar q_t$. The convergence of the rate function is achieved by the arguments in the proof of Proposition \ref{prop2} and dominated convergence.
    
  \end{proof}

\end{document}